\newcommand\Mark[1]{\textsuperscript#1}
\tikzstyle{vertex}=[circle, draw, inner sep=0pt, minimum size=6pt]
\newtheorem{theorem}{Theorem}[section]
\newtheorem{lemma}[theorem]{Lemma}
\newtheorem{corollary}[theorem]{Corollary}
\newtheorem{prop}[theorem]{Proposition}
\theoremstyle{definition}
\newtheorem*{define}{Definition}
\theoremstyle{remark}
\newtheorem{remark}[equation]{Remark}
\date{\today}
\begin{document}

\renewcommand{\abstractname}{\large Abstract}

\begingroup
\centering
{\LARGE Discrete All-Pay Bidding Games }\\[1.5em]
\large Michael Menz\Mark{1}, Justin Wang\Mark{2}, Jiyang Xie\Mark{3}\\[1em]
\begin{tabular}{*{3}{>{\centering}p{.3\textwidth}}}
\Mark{1}Yale University & \Mark{2}Yale University & \Mark{3}Yale University \tabularnewline
\url{michael.menz@yale.edu} & \url{justin.wang@yale.edu} & \url{jiyang.xie@yale.edu}
\end{tabular}\par
\endgroup

\vspace{1cm}

\begin{abstract}
In an all-pay auction, only one bidder wins but all bidders must pay the auctioneer. All-pay bidding games arise from attaching a similar bidding structure to traditional combinatorial games to determine which player moves next. In contrast to the established theory of single-pay bidding games, optimal play involves choosing bids from some probability distribution that will guarantee a minimum probability of winning. In this manner, all-pay bidding games wed the underlying concepts of economic and combinatorial games. We present several results on the structures of optimal strategies in these games. We then give a fast algorithm for computing such strategies for a large class of all-pay bidding games. The methods presented provide a framework for further development of the theory of all-pay bidding games.
\end{abstract}

\section{Introduction}

At the conclusion of an all-pay auction, all bidders must pay the bids they submitted, with only the highest bidder receiving the item.  With this idea in mind, one can play a variant of a two-player game using an all-pay auction to decide who moves next instead of simply alternating between players.  For example, one could play all-pay Tic-Tac-Toe with $100$ chips. Each round both players privately record their bids and then simultaneously reveal them. If player A bids $40$ and his opponent bids $25$, player A would get to choose a square to mark and the next round of bidding would begin wih player $A$ having $85$ chips, player $B$ having $115$ chips.  Note that the chips have no value outside the game and only serve to determine who moves - the ultimate goal is still just to get three-in-a-row.

Another variant of the game could have only the player who wins the move pay his/her bid, i.e. deciding who moves next using a first-price auction.  These games were first studied formally in the 1980s by Richman, whose work has since then been greatly expanded upon.  
Intuitively, there is less risk in these ``Richman games'' for the player losing the bid.  If your opponent bids $100$ for a certain move, it makes no difference whether your bid was $99$ or $0$.  All that matters is that your opponent's bid was higher.  A surprising consequence of this single-pay structure is that for every state of a game, there exists a ``Richman value'' $v$ for each player that represents the proportion of the total chips that player would need to hold to have a deterministic winning strategy.  In this situation, the player with the winning strategy can tell her opponent what bid she will be making next without affecting her ability to ultimately win.  For zero-sum games, 
this means that unless a player's chip ratio is exactly $v$, then one of the players must have such a winning strategy \citep{LLPU96}, \citep{LL+99}. 

Our objective is to begin the formal study of all-pay bidding games. Returning to the above example where your opponent bids $100$ chips and you are indifferent between bidding $99$ and $0$, it is clear this is no longer true for an all-pay bidding mechanism. You would be very disappointed had you bid $99$, as your opponent would be paying just $1$ chip on net to make a move.  Had you bid $0$, though, you might feel pretty good about not moving this current turn, as the $100$ extra chips may make a bigger difference for the rest of the game.  Thus, there are at least two bidding scenarios which intuitively seem like very good positions to be in: winning the bid by a relatively small number of chips or losing the bid by a relatively large number of chips.  This behavior suggests that, unlike in Richman games, in all-pay bidding games one of the players will not necessarily have a deterministic winning strategy.  Instead, players must randomize their bidding in some way. Thus, we must appeal to the concept of mixed bidding strategies in Nash equilibria.

\subsection{A Game of All-Pay Bidding Tic Tac Toe}
Before presenting formal definitions and results, we provide a sample all-pay bidding game to illustrate some of the main features of playing these games.
Alice and Bob, each with $100$ chips, are playing all-pay bidding Tic-Tac-Toe.
Each turn Alice and Bob secretly write down a bid, a whole number less than or equal to their total number of chips. 
They then reveal their bids and whoever bid more gets to decide who makes the next move. 
We say a player has \textbf{advantage} if, when players bid the same amount, that player gets to decide who makes the next move. 
The question of deciding how to assign advantage is one we encountered early on. For our games, we give advantage to the player with more chips, then arbitrarily let Alice have advantage when Alice and Bob have the same number of chips. A number of other mechanisms would also suffice, such as alternating advantage or having a special ``tie-breaking'' chip that grants advantage and is passed each time it is used. Our choice was made in the interest of computational simplicity and to eventually allow extension to real-valued bidding.  

\textit{First Move.} Both players have $100$ chips. Alice bids $25$, Bob bids $40$. Bob wins the right to move and plays in the center of the board.

\begin{center}
\includegraphics[width=0.7in]{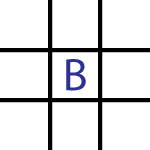}
\end{center}

\textit{Second Move.} Alice has $115$ chips, Bob has $85$ chips. Alice wants to win this move to keep pace with Bob, but also does not see why it should be worth more than the first, so she only slightly increases her bid to $30$. Bob, thinking that Alice may want to win this move more, is content to let Alice win and collect chips by bidding $0$. Alice wins the right to move and plays in the top-left corner of the board.

\begin{center}
\includegraphics[width=0.7in]{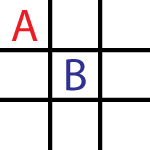}
\end{center}

\textit{Third Move.} Alice has $85$ chips, Bob has $115$ chips.  Alice bids $45$, Bob bids $40$, so Alice wins the right to move and plays in the top-center of the board.

\begin{center}
\includegraphics[width=0.7in]{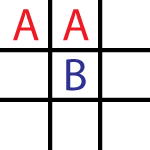}
\end{center}

\textit{Fourth Move.} Alice has $80$ chips, Bob has $120$ chips.  Alice is one move away from winning and decides to risk it and bid all of her $80$ chips. Unfortunately for her, Bob has guessed her move and has himself bid $80$ as well. Because Bob has more chips overall, he uses his advantage to win the tie and plays in the top-right corner of the board, blocking Alice's victory and setting himself up for one.

\begin{center}
\includegraphics[width=0.7in]{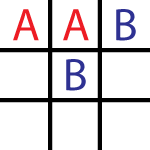}
\end{center}

\textit{Fifth Move.} Alice has $80$ chips, Bob has $120$ chips. Bob has more chips and is just a move away from winning, so he can bid everything, play in the bottom-left corner and win the game.

In normal Tic-Tac-Toe, both players can guarantee a draw by playing well, but as we see from this example, the result of a game of all-pay Tic-Tac-Toe involves far more chance.

For example, at the fourth move in the above game, Alice could have guessed Bob might bid $80$ and chosen to ``duck'' by bidding $0$. In this case Bob would win the move and play as before, but now the chip counts would be $160$ to $40$ in Alice's favor, and Alice can bid $40$ and then $80$ to win the next two moves and win in the left column. It is easy to see that if a player knows what his opponent will bid at each move, he can win the game easily. Thus, in the vast majority of all-pay bidding games, optimal play cannot be deterministic.

Though we do not return to Tic-Tac-Toe in this paper, it served as a test game for much of our research. Using our results, we built a computer program to play all-pay bidding Tic-Tac-Toe optimally. The program can be played against at \url{http://biddingttt.herokuapp.com}. The theory behind this program, which is not specific to Tic-Tac-Toe, will be the focus of the rest of the paper.

\subsection{Overview of Results}

Our ultimate goal is to characterize the optimal strategies for a general class of all-pay bidding games.
The game consists of iterations of both players bidding for the right to move followed by one of the players making a move. In turn, an optimal strategy will also have two parts: the bid strategy and the move strategy.  For a given position in the game (e.g. a configuration of the Tic-Tac-Toe board) and chip counts for each of the players (e.g. Alice has $115$ chips, Bob has $85$ chips), the bid strategy must tell players how to best randomize their bets (e.g. Alice bids $0$ chips half the time, $80$ chips half the time) while the move strategy must tell whoever wins the bid the best move to make (e.g. where to play on the Tic-Tac-Toe board).

The problem of determining move strategy is largely combinatorial in nature and remains similar to its analog in Richman games. We can still represent the space of game states as a directed graph, and there is a not always a single best move that each player can make upon winning the bid.  That is, the best move could also depend on each player's chip counts moving forward.

The focus of our work, then, will be on determining the optimal bidding strategy for any game position and chip counts.
Naturally, this should depend on a player's chances of winning in any of the possible subsequent game situations (i.e. after a single move and updated chip counts).  For purposes of initial analysis, we will assume that these future winning probabilities are known, and see how the bidding strategy can be determined from this information.  Then, by using the recursive nature of the directed graph, we will be able to start from the ``win'' and ``loss'' nodes (where the probability is just $1$ or $0$) to find the optimal bidding strategies and winning probabilities for any game situation.  For the rest of this paper, we will often refer to a bidding strategy as just a ``strategy'' when it is clear that the focus is just on the bidding side of the game.  Here, a strategy will be a probability vector where the $i$th coordinate corresponds to the probability a player will bid $i$ chips. Further, a Nash equilibrium for a game situation will just be a pair of strategies so that neither player has an incentive to deviate.  This means that each player's strategy will maximize his/her minimum probability of ultimately winning from the next turn of the game.

It quickly becomes apparent that a naive recursive algorithm using linear programming is feasible only for games with very few moves.  Thus, in the interest of being able to practically calculate the optimal bidding strategies for general games, we prove some structural results on the Nash equilibria.  In particular, useful structure arises when we study a particular class of games that we dubbed ``precise'', which roughly speaking are games where having one more chip is strictly better than not.  The key result is a surprising relationship between opposing optimal strategies that allows one to immediately write a Nash equilibrium strategy for the player without advantage if given a Nash equilibrium strategy for the player with advantage.

This relationship, (\ref{reverse_thm}), which we call the Reverse Theorem, is a critical step toward the calculation of optimal strategies for precise games. Further, by assigning an arbitrarily small value in the game to each chip, we get a precise game that is very similar to the original game.  We show that the optimal strategies we can calculate for these new precise games will indeed converge to optimal strategies for our possibly imprecise games. Our theoretical results ultimately culminate in a fast algorithm for computing optimal probabilistic bidding strategies. Together with a move strategy for the combinatorial side of the game, this gives a complete characterization of optimal play for all-pay bidding games.

\section{Strategies in precise games}\label{tools}

Let $G_{a, b}$ denote a single turn of a two-player all-pay bidding game $\mathcal{G}$ where player $A$ is endowed with $a$ chips and player $B$ is endowed with $b$ chips. The underlying combinatorial game $\mathcal{G}$ is a two-player zero-sum game, represented by an acyclic, colored, directed graph with two marked vertices, $\mathcal{A}$ and $\mathcal{B}$. The game begins by placing a token at some starting vertex. At each turn, a player moves the token to an adjacent vertex. Player $A$ wins if the token reaches $\mathcal{A}$ and player $B$ wins if the token reaches $\mathcal{B}$. By saying the graph is colored, this means that edges are one of two colors, say red and blue, such that $A$ can only move the token along red edges and $B$ can only move the token along blue edges.  To ensure consistency in the bidding strategy from turn to turn, we seek to avoid situations where the winner of a bid can be put in zugzwang - i.e. where it would be better to not move at all.  Thus, the bid winning player, rather than simply being able to move next, gets to determine who moves next.  With this condition, $\mathcal{G}$ can an asymmetric game where zugzwang is possible, like chess and many other popular two player games.

The \textbf{payoff}, or value of the game, for player A at $G_{a, b}$ is denoted by $v_A(G_{a, b}) \in [0,1]$ and is equal to the probability that player A wins the game under optimal play. That is, we set $v_A(\mathcal{A}) = 1$ and $v_A(\mathcal{B}) = 0$ and calculate payoffs recursively. Similarly, let $v_B(G_{a, b})$ denote the probability that player $B$ wins the game. Often, when the chip counts or specific combinatorial game are not relevant to the discussion, the payoffs will be shortened to $v_A$ and $v_B$. Note that $v_B = 1-v_A$ as we only study games that cannot end in ties (for the game of Tic-Tac-Toe above, we can arbitrarily let one of the players win all draws).

Thus a \textbf{payoff matrix} for player A in $G_{a, b}$ is denoted by $M_A(G_{a, b})$ and is given by
\[ (M_A)_{i,j} = 
\left\{ \begin{array}{cc} 
\max(\max_{G' \in \mathcal{S}_A(G)} v_A(G'_{a-j+i, b-i+j}), \min_{G' \in \mathcal{S}_B(G)} v_A(G'_{a-j+i, b-i+j})) & \text{if $A$ wins the bid}\\
\min(\min_{G' \in \mathcal{S}_B(G)} v_A(G'_{a-j+i, b-i+j}), \max_{G' \in \mathcal{S}_A(G)} v_A(G'_{a-j+i, b-i+j})) & \text{if $B$ wins the bid}\end{array}\right.
\]
where $\mathcal{S}_A(G)$ and $\mathcal{S}_B$ are the set of game positions that can be moved to from $G$ by $A$ and $B$ respectively. 
The $(i, j)$ entry corresponds to player $A$'s probability of winning the game after $A$ bids $j$ and $B$ bids $i$.
Note this is well-defined because the game is zero-sum: by moving to the game state that minimizes Player $A$'s payoff, player $B$ is maximizing his own payoff at the same time (and vice-versa).
Similarly, let $M_B(G_{a, b})$ denote the payoff matrix for player $B$.

We notice that if player $A$ bids $x$ and player $B$ bids $y$, this is equivalent to player $A$ bidding $x+z$ and player $B$ bidding $y+z$ for any $z$ because the players are paying each other. Thus, we have that payoff matrices are Toeplitz, or diagonal-constant. We will write player $A$'s and player $B$'s payoff matrices for $G_{a, b}$ as
\[
\left( \begin{array}{cccc}
\alpha_0 & \alpha_1 & \ldots & \alpha_a \\
\alpha_{-1} & \alpha_0 & \ldots & \alpha_{a-1} \\
 \vdots & \vdots & \vdots & \vdots \\
\alpha_{-b} & \alpha_{-b+1} & \ldots & \alpha_{-b+a}
\end{array} \right)
\hspace{1cm}\text{and}\hspace{1cm}
\left( \begin{array}{cccc}
\beta_0 & \beta_1 & \ldots & \beta_b \\
\beta_{-1} & \beta_0 & \ldots & \beta_{b-1} \\
\vdots & \vdots & \vdots & \vdots \\
\beta_{-a} & \beta_{-a+1} & \ldots & \beta_{-a+b}
\end{array} \right)\]
respectively.

We pause to consider a simple example. Let the underlying game be one where player $A$ needs to make two moves to win, while player $B$ needs to make only one more move to win. Suppose player $A$ has $5$ chips while player $B$ has $3$ chips. Then we would get the following payoff matrices for player $A$ and player $B$ 
\[
\left( \begin{array}{cccccc}
1 & 1 & 1 & 0 & 0 & 0\\
0 & 1 & 1 & 1 & 0 & 0\\
0 & 0 & 1 & 1 & 1 & 0\\
0 & 0 & 0 & 1 & 1 & 1
\end{array} \right)
\hspace{1cm}\text{and}\hspace{1cm}
\left( \begin{array}{cccc}
0 & 1 & 1 & 1 \\
0 & 0 & 1 & 1 \\
0 & 0 & 0 & 1 \\
1 & 0 & 0 & 0 \\
1 & 1 & 0 & 0 \\
1 & 1 & 1 & 0
\end{array} \right)\]
respectively

A \textbf{strategy} for player A in $G_{a, b}$ is given by an $(a+1)$-dimensional column vector with all non-negative entries that sum to 1. The $i$-th entry of this vector (where we start indexing at $0$) gives the probability that player $A$ will bid $i$ chips. Similarly, a \textbf{strategy} for player $B$ in $G_{a,b}$ and is given by a $(b+1)$-dimensional column vector satisfying the same conditions. We denote a Nash equilibrium strategy in the game $G_{a,b}$ as $S_A(G_{a,b})$ for player $A$ and as $S_B(G_{a,b})$ for player $B$. Often times we will not be too explicit with the size of these vectors. It should be clear from context.

Note that the $i$th row of $M_A$ corresponds to the payoffs of each of $A$'s pure strategies if her opponent $B$ bids $i$.
Letting $A_i$ be the $i$th row of $M_A$, we have $A_i \cdot S_A= a_{i0}(S_A)_0 + \cdots + a_{ia}(S_A)_a= (P_A)_i$, a weighted average of $A$'s pure payoffs when $B$ bids $i$.
Thus, $(P_A)_i$ is player $A$'s probability of winning if her strategy is $S_A$ and her opponent purely bids $i$.
For example, if we have
\[
M_A\cdot S_A= \left( \begin{array}{cc}
1 & \frac{1}{2} \\
0 & 1
\end{array}\right) \cdot \left(\begin{array}{c}
\frac{1}{2} \\
\frac{1}{2}
\end{array}
\right)=
\left(\begin{array}{c}
\frac{3}{4} \\
\frac{1}{2}
\end{array}
\right),
\]
this means by playing $S_A$, player $A$ wins $\frac{3}{4}$ of the time if player $B$ only bids $0$ and wins $\frac{1}{2}$ of the time if player $B$ only bids $1$.

Now, if player $B$'s strategy is $S_B$, $S_B^TM_AS_A=(S_B)_0(P_A)_0+\cdots+(S_B)_b(P_A)_b$, another weighted average of $A$'s payoffs for each of $B$'s pure strategies.
Thus, $S_B^TM_AS_A$ is exactly $A$'s payoff if she plays $S_A$ and her opponent plays $S_B$.
$S_A^TM_BS_B$ is $B$'s payoff in the same situation.
Continuing with the above example, if we now let $S_B^T=\left(\begin{array}{cc}
\frac{1}{2}&
\frac{1}{2}
\end{array}\right)$
then $S_B^TM_AS_A=\frac{1}{2}\cdot\frac{3}{4}+\frac{1}{2}\cdot\frac{1}{2}=\frac{5}{8}$.
So given strategies $S_A$ and $S_B$ for players $A$ and $B$,
player $A$ has a $\frac{5}{8}$ probability of winning.

We compile these results in the lemma below.

\begin{lemma}[]\label{basics}
Let $ M_A $ and $ M_B $ be payoff matrices for players $ A $ and $ B $, respectively, in $ G_{a, b} $. Then the following statements are true.
\begin{itemize}
\item[(a)] The diagonals of $ M_A $ and $ M_B $ are constant, i.e. the payoff matrices are Toeplitz.

\item[(b)] Let $\mathbf{1}$ be the appropriately sized matrix whose entries are all 1. Then $ M_B =\mathbf{1} -M_A ^ T $.

\item[(c)] Suppose $ (S_A, S_B) $ is a Nash equilibrium. Then $ (M_B S_B)_i = v_B $ if $ (S_A)_i\neq 0 $ and $ (M_A S_A)_i = v_A $ if $ (S_B)_i\neq 0 $.
\end{itemize}
\end{lemma}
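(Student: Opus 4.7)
For part (a), I would formalize the invariance argument already sketched in the text: if $A$ bids $j$ and $B$ bids $i$, both the winner of the bid (determined by the sign of $j - i$, with ties broken by chip count) and the resulting chip counts $a - j + i$ and $b - i + j$ depend only on the difference $j - i$. Hence $(M_A)_{i, j} = (M_A)_{i+1, j+1}$ whenever both indices are valid, which is the Toeplitz property; the same reasoning applies to $M_B$.

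For part (b), I would unwind the indexing conventions. The entry $(M_A)_{i, j}$ is $A$'s payoff in the scenario where $B$ bids $i$ and $A$ bids $j$, while $(M_B)_{j, i}$ is $B$'s payoff in the same scenario ($A$ bids $j$, $B$ bids $i$). Since the game is zero-sum with no ties, every reachable state $G'$ satisfies $v_A(G') + v_B(G') = 1$. I would then inspect the two-case definition of the payoff matrix and observe that the $\max/\min$ expressions defining $(M_B)_{j, i}$ from $v_B$ are, after using $v_B = 1 - v_A$ together with $\max(1 - f) = 1 - \min f$, exactly $1$ minus the expressions defining $(M_A)_{i, j}$. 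This gives $(M_B)_{j, i} = 1 - (M_A)_{i, j}$, i.e., $M_B = \mathbf{1} - M_A^T$.

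For part (c), I would invoke the standard indifference condition for mixed Nash equilibria. Player $A$'s expected payoff from playing pure bid $j$ against $B$'s mixed strategy $S_B$ is $(M_A^T S_B)_j$, and her equilibrium payoff satisfies $v_A = \sum_j (S_A)_j (M_A^T S_B)_j$. If some $j$ with $(S_A)_j > 0$ had $(M_A^T S_B)_j < v_A$, then some other $j'$ in the support would have to satisfy $(M_A^T S_B)_{j'} > v_A$, and $A$ could strictly improve her payoff by shifting mass from $j$ to $j'$, contradicting equilibrium. Thus $(M_A^T S_B)_i = v_A$ for every $i \in \mathrm{supp}(S_A)$, and applying part (b) gives $(M_B S_B)_i = 1 - (M_A^T S_B)_i = v_B$. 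A symmetric argument yields $(M_A S_A)_i = v_A$ whenever $(S_B)_i \neq 0$.

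The main hurdle throughout is bookkeeping: $M_A$ is indexed with $B$'s bid on the rows and $A$'s bid on the columns, while $M_B$ is indexed the other way, and the transpose in part (b) is exactly what reconciles these conventions. Once this correspondence is nailed down, parts (a) and (b) are elementary structural observations, and part (c) is the standard best-response condition for mixed Nash equilibria, translated from $A$'s payoff to $B$'s via (b).
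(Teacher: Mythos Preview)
Your proposal is correct. The paper itself does not give a formal proof of this lemma: the text preceding it sketches the Toeplitz property (``if player $A$ bids $x$ and player $B$ bids $y$, this is equivalent to player $A$ bidding $x+z$ and player $B$ bidding $y+z$'') and then simply states ``We compile these results in the lemma below,'' leaving (b) and (c) unargued. Your write-up supplies exactly the details one would expect---the invariance argument for (a), the zero-sum bookkeeping with the row/column convention swap for (b), and the standard best-response/indifference argument for (c), translated via (b)---so it is both correct and in the spirit of what the paper leaves implicit.
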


\noindent This lemma provides the basic structure from which many of our main proofs will follow from later.

It is clear that $v_A(G_{a+1, b}) \ge v_A(G_{a, b})$, since player $A$ can always bid as if he did not have the extra chip. We now define a class of games pivotal to our analysis in which this inequality is strict. Formally, a game $G$ is called \textbf{precise} if in every successor state to $G$, it is strictly better to have one more chip.

\begin{remark}\label{precise_matrices}
We note that in particular, this guarantees a certain strict monotonicity among the entries of the payoff matrices. In particular, winning the bid by one less chip is always strictly preferable, as is losing by one more chip. Thus we have that for the player with advantage, $\alpha_i > \alpha_j$ for $0 \le i < j$ and $\alpha_i > \alpha_j$ for $i < j < 0$. A similar relationship holds for the player without advantage, except $\beta_0 < \beta_1$ and $\beta_0 > \beta_{-1}$. 
\end{remark}

\begin{define}
 A strategy $S$ has \textbf{length} $\ell=\ell(S)$ if $S_{\ell-1} \ne 0$ and $S_{m} = 0$ $\forall m \geq \ell$. A strategy $S$ is \textbf{gap-free} if $S_i,S_j \ne 0$ if and only if $S_k \ne 0$ $\forall i \le k \le j$.
\end{define}

The definition of length encapsulates the observation that unless the game is close to completion, players will never bid a large proportion of their chips. The second definition seems more arbitrary at the moment, but it plays a pivotal role in the following Proposition and will serve to greatly simplify the language throughout the paper.

\begin{prop}[]\label{grounded_gapfree}
Let $ G_{a, b} $ be precise. Any equilibrium strategy for the player with advantage is gap-free and bids $0$ with nonzero probability, while any equilibrium strategy for the other player is gap-free and bids 1 with nonzero probability. If the player with advantage has an equilibrium strategy of length $\ell  $, any equilibrium strategy for the other player has length $\ell  $ or $\ell  +1 $.
\end{prop}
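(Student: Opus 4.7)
The plan is to lock down the supports of $S_A$ and $S_B$ via two one-step implications and then chain them. Set $g(i) := (M_A S_A)_i$ and $\tilde g(i) := (M_B S_B)_i$; by Lemma \ref{basics}(c), $g$ equals $v_A$ on $I_B$ and is $\geq v_A$ elsewhere, and symmetrically $\tilde g$ equals $v_B$ on $I_A$ and is $\geq v_B$ elsewhere. The Toeplitz structure (Lemma \ref{basics}(a)) collapses the first differences to
\[
g(i+1) - g(i) \;=\; \sum_{j} \delta_{j-i}(S_A)_j,
\qquad
\tilde g(i+1) - \tilde g(i) \;=\; \sum_{j} \epsilon_{j-i}(S_B)_j,
\]
where $\delta_k := \alpha_{k-1} - \alpha_k$ and $\epsilon_k := \beta_{k-1} - \beta_k$. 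Unpacking Remark \ref{precise_matrices}, every $\delta_k$ is strictly positive except for a single negative entry at $k = 0$, and every $\epsilon_k$ is strictly positive except for a single negative entry at $k = 1$: the flip for the advantaged player sits at the tied bid (which $A$ wins), and for the non-advantaged player it sits at the cheapest outbid that defeats $A$'s advantage.

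The one-step implications come from killing the single sign flip. If $(S_A)_i = 0$, the only possibly negative summand in $g(i+1) - g(i)$ (the $j = i$ term) vanishes, so the remaining strictly positive coefficients against a nonzero probability vector give $g(i+1) > g(i) \geq v_A$, which forces $(S_B)_{i+1} = 0$. The mirror argument on $\tilde g$ gives $(S_B)_k = 0 \Rightarrow (S_A)_k = 0$ for every $k \geq 1$. Chaining,
\[
(S_A)_i = 0 \;\Longrightarrow\; (S_B)_{i+1} = 0 \;\Longrightarrow\; (S_A)_{i+1} = 0,
\]
so once $S_A$ vanishes at some index it stays vanished. Hence $I_A$ is an initial segment $\{0, 1, \ldots, \ell - 1\}$, which makes $S_A$ gap-free with $(S_A)_0 > 0$. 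The parallel chain forces $I_B \cap \{1, 2, \ldots\}$ to be an initial segment of $\{1, 2, \ldots\}$, leaving $I_B \in \bigl\{\, \{0\},\; \{1, \ldots, m\},\; \{0, 1, \ldots, m\} \,\bigr\}$ — each contiguous, so $S_B$ is gap-free.

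To promote to $(S_B)_1 > 0$, I rule out $I_B = \{0\}$: if $B$ bids $0$ with probability one, then $A$'s pure payoff from bidding $j$ is $\alpha_j$, uniquely maximized at $j = 0$ by Remark \ref{precise_matrices}, forcing $I_A = \{0\}$; but then $B$'s pure payoff from bidding $j$ is $\beta_j$, and $\beta_1 > \beta_0$ makes $j = 1$ a strict improvement, contradicting equilibrium. The length bound is then immediate: $(S_A)_\ell = 0$ propagates to $(S_B)_k = 0$ for all $k \geq \ell + 1$, giving $\ell(S_B) \leq \ell + 1$; contraposing the mirror implication at index $\ell - 1$ yields $(S_B)_{\ell - 1} > 0$ for $\ell \geq 2$ and hence $\ell(S_B) \geq \ell$, with the boundary case $\ell = 1$ already covered by $(S_B)_1 > 0$. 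The main subtlety is correctly reading the single-sign-flip structure of $\delta$ and $\epsilon$ off Remark \ref{precise_matrices}; once that sign pattern is in hand, the rest is a short chain of complementary-slackness steps.
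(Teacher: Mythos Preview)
Your proof is correct and follows essentially the same skeleton as the paper's: both arguments establish the two one-step implications $(S_A)_i = 0 \Rightarrow (S_B)_{i+1} = 0$ and $(S_B)_{i+1} = 0 \Rightarrow (S_A)_{i+1} = 0$, then chain them to force the supports to be initial segments, and handle the degenerate case $I_B = \{0\}$ separately. The only difference is packaging: the paper proves each implication by an explicit chip-saving deviation (``shift the mass at $i+1$ down to $i$; since the opponent never bids $i$, outcomes are unchanged and precision makes this strictly better''), whereas you extract the same content algebraically by writing the Toeplitz first differences $g(i+1)-g(i) = \sum_j \delta_{j-i}(S_A)_j$ and reading off the single sign flip in $\delta$ and $\epsilon$ from Remark~\ref{precise_matrices}, then invoking complementary slackness. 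These are two phrasings of the same inequality, so I would not call your route genuinely different---but your sign-pattern bookkeeping is clean and makes the location of the ``dangerous'' index ($k=0$ for $\delta$, $k=1$ for $\epsilon$) very transparent.
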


\begin{proof}
Suppose without loss of generality that player $ A $ has advantage, and let $ S_A = (s_0,\dots, s_a) $ and $ S_B = (t_0,\dots , t_b) $ be equilibrium strategies for players $ A $ and $ B $ respectively. We claim that if $ i\geq 0 $,
\begin{itemize}
\item[(i)] $ s_i = 0 $ implies $ t_{i +1} = 0 $, and
\item[(ii)] $ t_{i +1} = 0 $ implies $ s_{i +1} = 0 $.
\end{itemize}
If $ s_i = 0 $ and $ t_{i +1}> 0 $, player $ B $ should alter his strategy so that he bids $ i $ with probability $ t_i+ t_{i +1} $ and $ i +1 $ with probability 0. This saves player $ B $ a chip whenever he would have bid $ i +1 $ without changing any possible outcome of these bids, and all other possibilities are unchanged. By precision, this new strategy is strictly better than $ S_B $ for player $ B $, a contradiction. This proves (i).

If $ t_i = 0 $ and $ s_i> 0 $, player $ A $ should alter her strategy  so that she bids $ i $ with probability $ s_i+ s_{i +1} $ and $ i +1 $ with probability 0. As in the previous case this new strategy is strictly better for player $ A $, a contradiction, proving (ii).

Together, (i) and (ii) complete the proof except in the case when $ S_B = (1, 0,\dots , 0) $. However, in this case an optimal strategy for player $ A $ is to also bid 0 with probability 1, and it follows that $G_{a, b} $ is not precise.
\end{proof}

This characterization of equilibrium strategies is what motivated our restriction to precise games. In the presence of precision, an easily observable, yet highly unexpected relationship between opposing optimal strategies appears. This relationship forms the foundation for the rest of our results.

\begin{define}
The \textbf{reverse} of a length $\ell$ strategy $S$ is given by 
\[ \mathcal{R}(S) = \mathcal{R}((s_0,s_1,\ldots,s_{\ell-1},0,\ldots,0)) = (s_{\ell-1},s_{\ell-2},\ldots,s_0,0,\ldots,0). \]
where the number of trailing zeroes will be clear from context.
\end{define}

\begin{theorem}[]\label{reverse_thm}
Suppose that $ G_{a, b} $ is precise, and that $ S $ is an equilibrium strategy for the player with advantage. Then $\mathcal{ R} (S) $ is an equilibrium strategy for the player without advantage.
\end{theorem}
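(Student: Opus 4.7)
The plan is to verify the saddle-point inequality $(M_A^T T)_j \le v_A$ for every $j \in [0,a]$, where $T = \mathcal{R}(S)$; this is equivalent to $T$ being an equilibrium strategy for $B$. Using the Toeplitz identity $(M_A)_{i,j} = \alpha_{j-i}$ together with $T_i = s_{\ell-1-i}$ for $i \in [0,\ell-1]$, the substitution $k = \ell-1-i$ yields
\[
(M_A^T T)_j \;=\; \sum_{k=0}^{\ell-1} s_k\,\alpha_{k - (\ell-1-j)},
\]
which equals $(M_A S)_{\ell-1-j}$ whenever $\ell-1-j \in [0,b]$. I split into the body $j \in [0,\ell-1]$ and the tail $j \in [\ell,a]$.

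For the body I need $(M_A S)_i \le v_A$ for $i \in [0,\ell-1]$. By Lemma \ref{basics}(c) together with claims (i) and (ii) inside the proof of Proposition \ref{grounded_gapfree}, every equilibrium $S_B$ for $B$ satisfies $(S_B)_i > 0$ for $i \in [1,\ell-1]$, so complementary slackness gives $(M_A S)_i = v_A$ on that range. The boundary index $i = 0$ is the heart of the argument; to establish $(M_A S)_0 = v_A$ I argue by contradiction. If $(M_A S)_0 > v_A$, then complementary slackness forces $(S_B)_0 = 0$ in every equilibrium for $B$. Defining $\hat{S}$ as $S$ shifted up by one bid, $\hat{S}_j = s_{j-1}$ (a legal strategy provided $\ell \le a$), the Toeplitz identity gives $(M_A \hat{S})_i = (M_A S)_{i-1}$, whence
\[
S_B^{T} M_A \hat{S} \;=\; \sum_{i\ge 1}(S_B)_i\,(M_A S)_{i-1} \;=\; v_A \,+\, (S_B)_1\bigl[(M_A S)_0 - v_A\bigr] \;>\; v_A,
\]
where I use $(M_A S)_m = v_A$ for $m \in [1,\ell-1]$, $(S_B)_0 = 0$, and $(S_B)_1 > 0$ (from Proposition \ref{grounded_gapfree}). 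This contradicts the optimality of $S_B$, which forces $A$'s best-response value to be at most $v_A$. Hence $(M_A S)_0 = v_A$.

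For the tail $j \ge \ell$, precision (Remark \ref{precise_matrices}) makes $(M_A^T T)_j$ strictly decreasing in $j$: the consecutive difference
\[
(M_A^T T)_j - (M_A^T T)_{j-1} \;=\; \sum_{k=0}^{\ell-1} s_k\bigl[\alpha_{j-\ell+1+k} - \alpha_{j-\ell+k}\bigr]
\]
has every index $j-\ell+1+k \ge 1$ in the strictly positive regime where $\alpha$ is strictly decreasing, so each bracketed term is strictly negative. Chaining back to $j = \ell-1$ gives $(M_A^T T)_j \le (M_A^T T)_{\ell-1} = (M_A S)_0 = v_A$ for every $j \ge \ell$, which closes the proof. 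The step I expect to require the most care is the boundary $i = 0$ case together with the edge situation $\ell = a+1$, where $\hat{S}$ is no longer a legal strategy; there one can run a symmetric argument using a down-shift of $S_B$ and the dual optimality condition for $A$.
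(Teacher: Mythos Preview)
Your proof is correct and follows essentially the same route as the paper's: both establish $(M_A S)_i = v_A$ for $i \in [0,\ell-1]$ by a shift argument for the critical case $i=0$, and then use the Toeplitz structure together with precision to verify that $\mathcal{R}(S)$ holds $A$'s payoff to at most $v_A$ against every pure bid. Your contradiction for $w_0>v_A$ is in fact a slight streamlining of the paper's --- you shift only $S_A$ up and compare directly against the optimality of $S_B$, whereas the paper shifts both $S_B$ down and $S_A$ up to force $v_B'=v_B$ before extracting the same identity.

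One refinement is worth making in the edge case $\ell = a+1$ (which, since the advantaged player has at least as many chips, forces $a=b$). The ``symmetric'' down-shift $\tilde S_B$ does not yield a contradiction from the dual inequality $T^T M_A S_A \ge v_A$ alone, since that bound points the wrong way. What does work is this: with $S_A$ of full support, complementary slackness gives $(M_A^T S_B)_j = v_A$ for every $j\in[0,a]$, hence $(M_A^T \tilde S_B)_j = (M_A^T S_B)_{j+1} = v_A$ for $j\le a-1$, and precision on the positive diagonal gives $(M_A^T \tilde S_B)_a < v_A$. Thus $\tilde S_B$ is itself an equilibrium strategy for $B$ with $(\tilde S_B)_0 = (S_B)_1 > 0$, and now complementary slackness at $i=0$ forces $(M_A S)_0 = v_A$. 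The paper's proof, incidentally, uses the same up-shift $S_A'$ and therefore tacitly assumes $\ell\le a$ as well; you deserve credit for flagging the issue.
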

\begin{proof}
Suppose without loss of generality that player $ A $ has advantage, and $ S = S_A = (s_0,\dots , s_{\ell -1}, 0, \dots, 0) $ has length $ \ell  $. By Lemma \ref{basics} and Proposition \ref{grounded_gapfree}, we have

\begin{equation}\label{reverse_thm1}
M_A \cdot S_A=
\left[ \begin{array}{cccc}
\alpha_0 & \alpha_1 & \ldots & \alpha_a \\
\alpha_{-1} & \alpha_0 & \ldots & \alpha_{a-1} \\
 \vdots & \vdots &  & \vdots \\
\alpha_{-b} & \alpha_{1-b} & \ldots & \alpha_{ a-b}
\end{array} \right]
\left [ \begin{array}{c}
s_0 \\
s_1\\
\vdots \\
s_{\ell-1} \\
0 \\
\vdots \\
0\\
\end{array} \right]
= 
\left [ \begin{array}{c}
w_0 \\
v_A \\
\vdots \\
v_A \\
w_{\ell} \\
\vdots \\
w_{b}
\end{array} \right],
\end{equation}
where $ w_0, w_{\ell },\dots , w_b\geq  v_A $. 
We claim further that $ w_0 = v_A $.

Suppose for a contradiction that $ w_0 > v_A $. Then if $ S_B $ is an equilibrium strategy for player $ B $, by Lemma \ref{basics} and Proposition \ref{grounded_gapfree} it is of the form $ S_B = (0, t_1, t_2,\dots , t_\ell , 0,\dots , 0) $ where $ t_1> 0 $, but possibly $ t_\ell = 0 $. 

When played against $ S_A $, $ S_B $ gives a payoff of $ v_B $.
Let $ v_B' $ be player $ B $'s payoff against $ S_A $ when he plays the shifted strategy $ S_B' = (t_1, t_2, \dots , t_\ell , 0,\dots , 0) $. Since $ (S_A, S_B) $ is a Nash equilibrium, $ v_B' \leq v_B $. On the other hand, player $ A $ can guarantee a payoff of $ 1-v_B' $ against $ S_B $ by using the strategy $ S_A' = (0, s_0,s_1,\dots , s_{\ell -2},s_{\ell - 1},\dots,0) $ since the probability of any given difference in bids occurring is the same in $(S_A', S_B) $ as in $ (S_A, S_B') $. Therefore $ v_B'\geq v_B $, so $ v_B = v_B' $, whence $S_B^T \cdot M_A \cdot S_A=S_B'^T \cdot M_A\cdot S_A $. Expanding this, we find
\begin{equation*}
0\cdot w_0 + t_1\cdot v_A+\cdots+t_{\ell - 1} \cdot v_A + t_{\ell}\cdot w_{\ell} =
t_1\cdot w_0+t_2\cdot v_A +\cdots+t_{\ell}\cdot v_A
\end{equation*}
Suppose $w_{\ell} > v_A$. Then, we must have $t_{\ell} = 0$, which solves to get $w_0 = v_A$. If $w_{\ell} = v_A$, the equation solves the same way to get $w_0 = v_A$. Thus, either way we have a contradiction of $w_0 > v_A$. Thus, $w_0 = v_A$.
Together with \eqref{reverse_thm1}, this gives
\begin{equation}\label{reverse_thm2}
v_A =\alpha _0 s_0+\cdots +\alpha _{\ell -1} s_{\ell -1} =\cdots =\alpha _{-(\ell - 1)} s_0+\cdots + \alpha_0 s_{\ell -1}. 
\end{equation}
By Lemma \ref{basics} we have $M_B = \mathbf{1} - M_A^T$, so
\[
M_B \cdot \mathcal{R}(S_A)=
\left[ \begin{array}{cccc}
 1-\alpha_0 & 1-\alpha_{-1} & \ldots & 1-\alpha_{-b} \\
 1-\alpha_{1} & 1-\alpha_0 & \ldots & 1-\alpha_{1-b} \\
 \vdots & \vdots &  & \vdots \\
 1-\alpha_{a} & 1-\alpha_{a-1} & \ldots & 1-\alpha_{a-b}
\end{array} \right]
\left[ \begin{array}{c}
s_{\ell-1} \\
\vdots \\
s_0 \\
0 \\
\vdots \\
0
\end{array} \right].
\]
For $ 0\leq i\leq \ell  -1 $ we have $(1-\alpha_i)s_{\ell-1}+\cdots+(1-\alpha_{i-\ell+1})s_0=(s_0+\cdots+s_{\ell-1})-
(\alpha_{i-\ell+1}s_0+\cdots+\alpha_is_{\ell-1})=1-v_A= v_B$ by equation  \eqref{reverse_thm2}. In other words, $\mathcal{ R} (S_A) $ guarantees player $ B $ his highest possible payoff against $ S_A $, so he has no incentive to deviate from $\mathcal{ R}  (S_A) $ if player $ A $ uses $ S_A $.

We now show player $A$ has no incentive to deviate from $S_A$ against $\mathcal{R}(S_A)$. If $\ell\le i \leq a$, the payoff for player $B$ if player $ A $ bids $i$ will be $(s_0+\cdots+s_{\ell-1})-(\alpha_{i-\ell+1}s_0+\cdots+\alpha_is_{\ell-1})$.
By the formulation of precision in terms of payoff matrices in Remark  \ref{precise_matrices}, we have strict inequalities $\alpha_{i-\ell+1}<\alpha_0$ through $\alpha_i<\alpha_{\ell-1}$, so $\alpha_{i-\ell+1}s_0+\cdots+\alpha_is_{\ell-1} < \alpha_0s_0+\cdots+\alpha_{\ell-1}s_{\ell-1} = v_A$. Thus player $A$ loses utility if she bids any amount greater than $\ell  -1 $ with positive probability. One also readily sees that if she alters her distribution of bids $ 0,\dots ,\ell -1 $ this will not change her payoff against $\mathcal{R}(S_A)$. It follows that $ (S_A,\mathcal{ R} (S_A)) $ is a Nash equilibrium as claimed.
\end{proof}

The Reverse Theorem reveals a strong relationship between opposing player's strategies. Using it, we can now fully characterize the set of optimal strategies for both players in precise games.

\begin{theorem}[]\label{advantageunique}
If $ G_{a, b} $ is precise, the player with advantage has a unique equilibrium strategy.
\end{theorem}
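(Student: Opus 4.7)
The plan is to argue by contradiction: suppose $S$ and $S'$ are two distinct equilibrium strategies for player $A$, and pin both down uniquely via the Reverse Theorem.

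First I would show $S$ and $S'$ must share a common length. By Theorem~\ref{reverse_thm}, $\mathcal{R}(S)$ and $\mathcal{R}(S')$ are equilibrium strategies for $B$, so by the exchange property of Nash equilibria in two-player zero-sum games, $(S', \mathcal{R}(S))$ is also a Nash equilibrium. The second half of the proof of the Reverse Theorem actually establishes the \emph{strict} inequality $(M_B \mathcal{R}(S))_i > v_B$ for every $i \geq \ell(S)$, using the strict monotonicity of the $\alpha_i$ from Remark~\ref{precise_matrices}. Lemma~\ref{basics}(c) applied to $(S', \mathcal{R}(S))$ then forces $\operatorname{supp}(S') \subseteq \{0, 1, \ldots, \ell(S) - 1\}$, so $\ell(S') \leq \ell(S)$. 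Swapping the roles of $S$ and $S'$ gives equality of lengths, call it $\ell$, and Proposition~\ref{grounded_gapfree} then forces both supports to be exactly $\{0, 1, \ldots, \ell - 1\}$.

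Next I would apply Lemma~\ref{basics}(c) to the Nash equilibria $(S, \mathcal{R}(S))$ and $(S', \mathcal{R}(S))$ to deduce $(M_A S)_i = (M_A S')_i = v_A$ for $i = 0, 1, \ldots, \ell - 1$. Writing $T_\ell$ for the $\ell \times \ell$ top-left Toeplitz submatrix of $M_A$, this is $T_\ell S = T_\ell S' = v_A \mathbf{1}_\ell$, so the difference $d := S - S'$ satisfies $T_\ell d = 0$; the normalization gives $\mathbf{1}^\top d = 0$. The theorem reduces to showing no such nonzero $d$ exists.

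This last step is the main obstacle, and I expect it to follow from the strict monotonicity of the $\alpha_i$ on each side of zero provided by precision. A nonzero $d \in \ker(T_\ell)$ with $\mathbf{1}^\top d = 0$ would generate a one-parameter family $S + \varepsilon d$ satisfying the same indifference equations and the normalization. For $\varepsilon$ small these remain strictly positive on $\{0,\ldots,\ell-1\}$ and still satisfy $(M_A(S + \varepsilon d))_i \geq v_A$ for $i \geq \ell$ by continuity, hence are themselves $A$-equilibria. Extending $\varepsilon$ to the boundary of feasibility must force either a coordinate to vanish --- which contradicts gap-freeness (if the vanishing index is interior) or the common-length conclusion above (if the vanishing index is $\ell - 1$) --- or else tighten some inequality $(M_A(S + \varepsilon d))_i \geq v_A$ with $i \geq \ell$, which should be ruled out by the same strict precision estimates that drive the second half of the Reverse Theorem's proof (comparing $\alpha_{i-\ell+1+k}$ to $\alpha_k$ termwise). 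Handling both boundary cases cleanly, and in particular extracting the correct strict inequality at row $i = \ell$, is where I expect the proof to require the most care.
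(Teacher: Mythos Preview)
Your overall architecture matches the paper's: establish a common length $\ell$, then run the one-parameter family $S + \varepsilon d$ out to the boundary of feasibility and contradict Proposition~\ref{grounded_gapfree}. Your route to common length, via the exchange property and the strict inequality $(M_B\mathcal{R}(S))_i > v_B$ for $i \geq \ell(S)$ extracted from the proof of Theorem~\ref{reverse_thm}, is a valid alternative to the paper's direct appeal to Proposition~\ref{grounded_gapfree}.

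The gap is in the third step, exactly where you flag it. The claim that $(M_A(S+\varepsilon d))_i \geq v_A$ for $i \geq \ell$ holds ``by continuity'' for small $\varepsilon$ is unjustified: continuity does not preserve a non-strict inequality under perturbation, and $(M_A S)_\ell = v_A$ can occur (the proof of Theorem~\ref{reverse_thm} explicitly treats $w_\ell = v_A$ as a live case). More seriously, your hope that precision alone rules out the tightening boundary is unfounded. Comparing $(M_A S)_\ell$ to $v_A = (M_A S)_{\ell-1}$ termwise involves the difference $(\alpha_{-1}-\alpha_0)s_{\ell-1}$, and precision says nothing about the order of $\alpha_{-1}$ versus $\alpha_0$ since these straddle the advantage threshold; the sign of $(M_A S)_\ell - v_A$ is genuinely indeterminate. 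The paper dispatches the issue with a single WLOG that you are missing: assume $(M_A S_A)_\ell \geq (M_A S_A')_\ell$ and parametrize as $S(x) = S_A' + x(S_A - S_A')$ for $x \geq 0$, so that $(M_A S(x))_\ell$ is nondecreasing in $x$ and the row-$\ell$ constraint never binds. Rows $i>\ell$ then follow from precision applied to the nonnegative vector $S(x)$, so the only possible boundary is a vanishing coordinate, and Proposition~\ref{grounded_gapfree} finishes (covering also the case of vanishing at index $0$, which you omitted). This direction choice is the missing ingredient; once you insert it, your argument and the paper's coincide.
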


\begin{proof}
Let player $A$ have advantage. Suppose that $S_A$ and $S_A'$ are distinct equilibrium strategies for player $A$. Let $S_A$ and $S_A'$ have lengths $\ell$ and $\ell'$ respectively. By the Reverse Theorem, player $B$ has strategies $\mathcal{R}(S_A)$ and $\mathcal{R}(S_A')$ which have lengths $\ell$ and $\ell'$ respectively. Suppose $\ell' \ne \ell$. Assume, without loss of generality, that $\ell' > \ell$. Then $\mathcal{R}(S_A')$ is a Nash equilibrium strategy for $B$ with length greater than $S_A$ which contradicts Propostion \ref{grounded_gapfree}. Thus, $\ell = \ell'$. 

Assume, without loss of generality, that $(M_AS_A)_{\ell} \ge (M_AS_A')_{\ell}$. That is, we assume, that if $B$ bids $\ell$ against $S_A$ he will do no better than if he were bidding $\ell$ against $S_A'$. It is possible he will do strictly worse as bidding $\ell$ is not necessarily a part of player $B$'s optimal strategy. Consider the following function:
\[ S(x) = S_A' + x(S_A - S_A') \]
We claim that for any $x$ for which $S(x)$ is a valid strategy, $S(x)$ is an optimal strategy. Note that $S(x)$ has entrywise sum of $1$ so $S(x)$ is at least valid for $x \in [0,1]$. Consider:
\[ (M_AS(x))_i = (M_AS_A')_i + x(M_AS_A - M_AS_A')_i \]
For $i < \ell$, $(M_AS_A')_i = (M_AS_A)_i = v_A$ so $ (M_AS(x))_i = v_A$. For $i = \ell$, $(M_AS_A)_i \geq (M_AS_A')_i$ so $(M_AS(x))_i \geq (M_AS_A')_i \geq v_A$. If player $B$ bids anything greater than $\ell$ then he will do strictly worse than if he bid $\ell$, because he will win by more than he would by bidding $\ell$. Therefore, $S(x)$ guarantees player $A$ a payoff of at least $v_A$. Choose the maximal $x^\star$ for which $S(x^\star)$ is valid. Because $S(x)$ has entrywise sum of $1$, it is only invalid if $S(x)$ has a negative entry. Thus, at this maximal $S(x^\star)$ has at least one zero entry. Either $S(x^\star)$ has length less than $\ell$, a $0$ in its first entry, or is not gap-free. Each of these is impossible (above, Prop \ref{grounded_gapfree}). Therefore distinct optimal strategies $S_A$ and $S_A'$ cannot exist.
\end{proof}

In most precise games, both players have unique optimal strategies. It is possible, however, to construct a game in which the player without advantage has multiple optimal strategies. We give a characterization of these as well. If $S$ is a strategy let $(0,S)$ represent a new strategy where anytime one would bid $i$ in $S$ he will bid $i+1$ in $(0,S)$.

\begin{theorem}[]\label{noadvantagestrategies}
Let $ G_{a, b} $ be precise and let player $A$ have advantage. The following statements hold:
\begin{enumerate}[(1)]
\item Player $B$ has a unique strategy of minimal length. This strategy is $\mathcal{R}(S_A)$.
\item If Player $B$ has more than one optimal strategy, then another optimal strategy is of the form $(0, \mathcal{R}(S_A))$. 
\item All other optimal strategies for player $B$ are of the form 
\[t\mathcal{R}(S_A) + (1-t)(0, \mathcal{R}(S_A)) \hspace{0.5cm} t \in [0,1].\]
\end{enumerate}
\end{theorem}

\begin{proof}

Throughout this proof we will use a method from the proof of Theorem \ref{advantageunique}. Suppose we have two strategies $P$ and $T$ such that wherever $P$ is non-zero so is $T$. Then we define
\[ E(x) = T + (P-T)x \]
We showed above that $E(x)$ gives an optimal strategy as long as it is valid. If we choose $x^*$ to be maximal so that $E(x^*)$ is valid, then $E(x^*)$ gives an optimal strategy with a $0$ in some spot where $S$ was nonzero. Let us call the strategy produced by this method $E(P,T)$.

We begin with (1). By the Reverse Theorem, player $B$ has a strategy $\mathcal{R}(S_A)$ which is of the same length as $S_A$. By Proposition \ref{grounded_gapfree}, player $B$ cannot have a strategy shorter than $S_A$. Therefore, $\mathcal{R}(S_A)$ is a strategy of minimal length for player $B$. Suppose $S$ is another strategy of minimal length for player $B$. Then $S^{\ast} = E(S,\mathcal{R}(S_A))$ is either of lesser length, is not gap-free, or has a $0$ in the first entry. The first two possibilites are impossible by Proposition \ref{grounded_gapfree}. In the third case, we can apply the same method again to get $E(S, S^{\ast})$ which is either of lesser length, not gap free, or has $0$'s in the first two entries. Each of these is impossible by Proposition \ref{grounded_gapfree}.

We now proceed to (2). Suppose player $B$ has more than one optimal strategy. Then by (1) it must be of length greater than $\mathcal{R}(S_A)$. Let $\ell$ be the length of $\mathcal{R}(S_A)$. By Proposition \ref{grounded_gapfree}, any other optimal strategy of player $B$ must be of length $\ell + 1$. Let $S$ be such a strategy. Suppose $S_0 \ne 0$. Then we can take $S' = E(\mathcal{R}(S_A),S)$ which must have a $0$ in the first coordinate lest we contradict Proposition \ref{grounded_gapfree}. We must show that $S' = (0, \mathcal{R}(S_A))$. Because $M_B$ is Toeplitz, 
\[\left(M_B \cdot (0, \mathcal{R}(S_A))\right)_{i+1} = \left(M_B \cdot \mathcal{R}(S_A)\right)_i \]
Therefore $(0, \mathcal{R}(S_A))$ guarantees player $B$ at least his optimal payoff unless player $A$ plays $0$. Suppose that if player $A$ bids $0$ then $(0, \mathcal{R}(S_A))$ gives player $B$ a payoff of $v$ less than his optimal payoff of $v_B$. Then define a strategy,
\[ S^\triangle = \frac{S' - c(0, \mathcal{R}(S_A))}{1-c} \]
for $c$ sufficiently small so that $S' - c(0, \mathcal{R}(S_A))$ has all positive entries. Then $S^\triangle$ is a valid strategy that guarantees player $B$ his optimal payoff if player $A$ bids anything from $1$ to $\ell+1$. It guarantees player $B$ more than his optimal payoff if player $A$ bids $0$ as:
\[ \left(M_B \cdot \frac{S' - c(0, \mathcal{R}(S_A))}{1-c}\right)_0 = \frac{1}{1-c} \cdot (v_B - cv) >\frac{1}{1-c} \cdot (v_B - cv_B) = v_B \]
$S^\triangle$ is a strictly better strategy than $S'$ as player $A$ always bids $0$ with nonzero probability. $S'$ is optimal so this is impossible. Thus, $(0, \mathcal{R}(S_A))$ is an optimal strategy. That it is equal to $S'$ will follow from (3).

Finally we prove (3). $\mathcal{R}(S_A)$ and $(0, \mathcal{R}(S_A))$ are optimal strategies so any convex combination of the two is optimal. Let $S^\star$ be an optimal strategy for player $B$ that is not a convex combination of the two. Then,
$S^\star$ must be of length $\ell+1$. Therefore we can take $E((0, \mathcal{R}(S_A)), S^\star)$. This gives a strategy which is either of length $\ell$, is not gap-free, or has multiple $0$'s at the begining. The latter two possibilities are impossible by Proposition \ref{grounded_gapfree}. $\mathcal{R}(S_A)$ is the unique optimal strategy of length $\ell$ so:
\[\mathcal{R}(S_A) = (0, \mathcal{R}(S_A)) + x(S^\star -(0, \mathcal{R}(S_A)))\]
\[\frac{1}{x}\mathcal{R}(S_A) + \frac{(x-1)}{x}(0, \mathcal{R}(S_A)) = S^\star\]
Note that $\frac{1}{x} +  \frac{(x-1)}{x} = 1$ and both coefficients must be postive or else the first or last entry of $S^\star$ will be negative. Thus, $S^\star$ is a convex combination of $\mathcal{R}(S_A)$ and $(0, \mathcal{R}(S_A))$.
\end{proof}

\section{Imprecise Games}

\subsection{Adjustments for Precision}
In most of the above proofs we assume that $G_{a,b}$ is a precise game. In many games with small associated graphs, this is not the case. The simplest example is a game where in the associated graph the only directed edge goes to $\mathcal{A}$. Then player $A$ always wins, so the chip counts do no matter whatsoever.
Thus, we apply a small adjustment to the payoff matrices for players $A$ and $B$. Pick a small $x>0$. We now define $M_A^x(G_{a,b})$ as
\[ M_A^x(G_{a,b}) = M_A(G_{a,b}) + xB_{a,b} \]
where $B_{a,b}$ is given by the $(b+1) \times (a+1)$ Toeplitz matrix
\[ \left[ \begin{array}{cccc} 
a & a-1 & \cdots & 0 \\
a+1 & a & \cdots & 1 \\
\vdots & \vdots & \ddots & \vdots \\
a+b & a+b-1 & \cdots & b
\end{array} \right]. \]
Intuitively, we can think of $xB_{a, b}$ as a payoff matrix that gives payoff $x$ for each chip a player has at the end of a turn.
$S_A^x(G_{a,b})$ is then given by the strategy that maximizes player $A$'s minimum payoff under $M_A^x(G_{a,b})$. $v_A^x(G_{a,b})$ is this payoff.

While the payoff no longer corresponds exactly to winning probability, the game $G_{a, b}^x$ is still zero-sum, with total utility $1+(a+b)x$ split between the two players. We generalize our Lemma \ref{basics} to this new game:

\begin{lemma}
The game represented by $M_A^x(G_{a,b})$ is precise.
\end{lemma}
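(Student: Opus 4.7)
The plan is to verify that, at the level of matrix entries, $M_A^x$ (and its counterpart $M_B^x$) exhibit the strict diagonal monotonicity identified by Remark~\ref{precise_matrices} as the matrix-level signature of a precise game.

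First, I would identify the diagonal constants of $M_A^x$. Both $M_A$ and $B_{a,b}$ are Toeplitz, and a direct inspection shows the $(i,j)$ entry of $B_{a,b}$ equals $a-(j-i)$, which depends only on the diagonal index $k := j-i$. Writing $\alpha_k$ for the diagonal constants of $M_A$, it follows that the diagonal constants of $M_A^x$ are $\tilde\alpha_k = \alpha_k + x(a-k)$.

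Next, I would establish the weak monotonicity that $M_A$ already enjoys within each of $A$'s two ``winning regimes.'' For $0 \le k_1 < k_2$ both indices fall in the regime where $A$ wins the bid; the two entries compare $v_A$ of successors at chip distributions $(a-k_1,b+k_1)$ and $(a-k_2,b+k_2)$, and since $A$ can always refuse to spend extra chips, $v_A$ at the successor is weakly monotone in $A$'s chip count, giving $\alpha_{k_1}\ge \alpha_{k_2}$. An analogous ``chip-burning'' argument handles the losing regime $k_1<k_2<0$.

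Combining these, the difference $\tilde\alpha_{k_1}-\tilde\alpha_{k_2} = (\alpha_{k_1}-\alpha_{k_2}) + x(k_2-k_1)$ is a sum of a nonnegative term and a strictly positive term, hence strictly positive. This is exactly the strict inequality required in the $\alpha$-portion of Remark~\ref{precise_matrices}, and the analogous strict inequalities for the $\tilde\beta_k$'s of $M_B^x$ follow either by the symmetric argument or from the Toeplitz identity $M_B^x = (1+(a+b)x)\mathbf{1} - (M_A^x)^T$. The main obstacle is the preparatory weak-monotonicity step; once that standard ``burn a chip'' observation is in hand, the strict monotonicity --- and therefore precision --- drops out of the $x>0$ correction in one line.
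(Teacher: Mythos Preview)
Your argument is essentially the paper's own: weak monotonicity in chip count (the ``burn a chip'' observation) plus the strictly positive $x$-correction yields strict monotonicity. The paper's proof is a one-sentence version of exactly this, phrased at the level of successor states rather than diagonal constants.

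One framing point is worth tightening. The paper \emph{defines} precision as a property of successor states (``in every successor state to $G$, it is strictly better to have one more chip''), and Remark~\ref{precise_matrices} records only the forward implication precision $\Rightarrow$ matrix monotonicity. Your write-up verifies the strict inequalities on the $\tilde\alpha_k$ and then invokes Remark~\ref{precise_matrices} as if it were a characterization. To match the lemma as stated, run the identical argument one level down: for any successor $G'$ and chip split $(a',b')$, the adjusted payoff is $v_A(G'_{a',b'}) + x a'$, so one more chip for $A$ changes the adjusted payoff by $\bigl(v_A(G'_{a'+1,b'-1}) - v_A(G'_{a',b'})\bigr) + x$, which is nonnegative plus strictly positive. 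That is the paper's proof, and your matrix computation is then the Remark~\ref{precise_matrices} consequence of it rather than a substitute for it.
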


\begin{proof}
Each entry of $M_A^x(G_{a,b})$ represents a successor state of the game where each player has some number of chips. From the way we have defined $B_{a,b}$, for any successor state in which having one more chip provided an equal payoff in $G_{a,b}$, having one more chip will now provide a payoff exactly $x$ greater.
\end{proof}

A natural question arising from this adjustment is whether or not it gives a good approximation of the actual payoff for $G_{a,b}$ and the actual Nash equilibria. The following theorem shows that by choosing a small enough $x$, $M_A^x, v_A^x$, and $S_A^x$ can be made arbitarily close to $M_A, v_A$ and some Nash equilbrium strategy $S_A$.

\begin{theorem}
\label{convergence}
With $S_A$ as described above,
\begin{equation} \lim_{x \rightarrow 0} M_A^x(G_{a,b}) = M_A(G_{a,b}),\tag{1}\end{equation}
\begin{equation} \lim_{x \rightarrow 0} v_A^x(G_{a,b}) = v_A(G_{a,b}),\tag{2}\end{equation}
\begin{equation} \lim_{x \rightarrow 0} S_A^x(G_{a,b}) = S_A(G_{a,b}).\tag{3}\end{equation}
\end{theorem}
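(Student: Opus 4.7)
The plan is to dispose of (1) immediately, deduce (2) from (1) via a sandwich estimate on the value of a matrix game, and then derive (3) from (1) and (2) using compactness of the strategy simplex together with the uniqueness result of Theorem \ref{advantageunique}.

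For (1), the definition is $M_A^x(G_{a,b}) = M_A(G_{a,b}) + xB_{a,b}$; the perturbation is linear in $x$ and vanishes entrywise as $x \to 0$, so there is nothing further to prove.

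For (2), I would exploit the fact that every entry of $B_{a,b}$ lies in $[0, a+b]$. Hence for any probability vectors $S_A, S_B$ we have $0 \le x\,S_B^T B_{a,b} S_A \le x(a+b)$, so
\[ S_B^T M_A S_A \;\le\; S_B^T M_A^x S_A \;\le\; S_B^T M_A S_A + x(a+b). \]
Applying $\min_{S_B}$ and then $\max_{S_A}$, and invoking the minimax theorem to identify $\max_{S_A}\min_{S_B}$ with the value, yields $v_A \le v_A^x \le v_A + x(a+b)$. Letting $x \to 0$ gives (2).

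For (3), first note that $S_A^x$ is well-defined and unique: the preceding lemma shows $G_{a,b}^x$ is precise, and Theorem \ref{advantageunique} then forces uniqueness of the equilibrium strategy for whichever player has advantage. Since the strategy simplex is compact, the net $\{S_A^x\}_{x>0}$ has at least one limit point $S_A^\star$ as $x \to 0$. For any such limit, the optimality inequality $S_B^T M_A^x S_A^x \ge v_A^x$, valid for every $S_B$, passes through the limit via (1) and (2) to give $S_B^T M_A S_A^\star \ge v_A$ for every $S_B$. Thus $S_A^\star$ is a Nash equilibrium strategy for the unperturbed $G_{a,b}$, so one can take $S_A := S_A^\star$ in the statement of (3).

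The main obstacle is the gap between "every subsequential limit is a Nash equilibrium" and the genuine existence of $\lim_{x\to 0} S_A^x$ as a single point. When the player with advantage already has a unique equilibrium in $G_{a,b}$, this is automatic, since every convergent subsequence of $S_A^x$ must have this unique equilibrium as its limit, forcing full convergence. In the general case, the equilibrium set for $G_{a,b}$ is nonempty, closed, and convex, and the best one can say without further work is that the set of limit points of $S_A^x$ is contained in it; the authors' phrasing ``some Nash equilibrium strategy $S_A$'' is consistent with this reading, and suggests that the precision perturbation $G_{a,b}^x$ acts as a refinement that selects a particular equilibrium of the unperturbed game in the $x \to 0^+$ limit.
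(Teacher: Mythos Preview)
Your arguments for (1) and (2) are correct and match the paper's: the matrix limit is immediate from $M_A^x=M_A+xB_{a,b}$, and the value limit follows from a squeeze using $0\le xS_B^TB_{a,b}S_A\le x(a+b)$, which is exactly what the paper's functions $f(x)\le g(x)\le h(x)$ encode.

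For (3) there is a genuine gap, and you have already put your finger on it. Your compactness argument shows only that every subsequential limit of $S_A^x$ is an equilibrium of the unperturbed game. You then appeal to Theorem~\ref{advantageunique} to force full convergence, but that theorem requires precision of $G_{a,b}$, which is precisely what is \emph{not} assumed here; in an imprecise game the player with advantage can have a nontrivial convex set of equilibria, and nothing in your argument rules out $S_A^x$ oscillating among several of them as $x\downarrow 0$. The paper does not treat the statement as the weaker ``some limit point is an equilibrium''; it needs an actual limit in order to base the later algorithm on it.

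The paper closes this gap by a substantially longer route that is quite different from compactness-plus-uniqueness. It introduces the restricted game $G_{a,b}\mid\ell$, proves that the restricted payoff matrix $M_A(G_{a,b}\mid\ell)$ (or its counterpart $M_B$) is invertible when $\ell$ is the equilibrium length (Lemma~\ref{preciseinvertible} and Proposition~\ref{wow}), and derives the explicit formula $S_A^x=(M_A^x(\ell))^{-1}\mathbf{1}/(\mathbf{1}^T(M_A^x(\ell))^{-1}\mathbf{1})$. From this it shows that on any interval where the length $\ell(S_A^x)$ is constant one has $S_A^x=S+xT$ for fixed vectors $S,T$, i.e.\ the strategy is \emph{affine} in $x$. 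A separate argument (Lemma~\ref{openneighborhoods} and Corollary~\ref{lengthstable}) shows that the length is indeed constant on some interval $(0,x_0]$. Affine dependence on $x$ on such an interval yields a genuine limit at $0$, and Lemma~\ref{weakconvergence} checks that this limit is optimal. Thus the missing idea is not a sharper compactness step but the structural fact that $S_A^x$ depends on $x$ in a rigid (affine) way once the equilibrium length has stabilized.
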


\begin{proof}[Proof of (1) and (2).]
We notice that (1) follows directly from the definition of $M_A^x$:
$$\lim_{x \rightarrow 0} M_A^x(G_{a,b}) = \lim_{x \rightarrow 0} (M_A(G_{a,b}) + xB_{a,b}) = M_A(G_{a,b})$$

We now consider (2). We can define three functions:
\begin{align*}
v_A^x(G_{a,b}) 	&= \min_i(M_A^x(G_{a,b})\cdot S_A^x(G_{a,b}))_i = g(x)\\
				\\
v_A^x(G_{a,b}) 	&= \min_i(M_A^x(G_{a,b})\cdot S_A^x(G_{a,b}))_i\\
				&= \min_i(M_A(G_{a,b})\cdot S_A^x(G_{a,b}) + xB\cdot S_A^x(G_{a,b}))_i\\
                &\le \min_i(M_A(G_{a,b})\cdot S_A^x(G_{a,b}))_i + \max_i(xB\cdot S_A^x(G_{a,b}))_i\\
                &\le \min_i(M_A(G_{a,b})\cdot S_A(G_{a,b}))_i + \max_i(xB\cdot \mathbf{1})_i\\
                &= v_A(G_{a,b}) + \max_i(xB\cdot \mathbf{1})_i = h(x)\\
                \\
v_A^x(G_{a,b}) 	&= \min_i(M_A^x(G_{a,b})\cdot S_A^x(G_{a,b}))_i\\
				&\ge \min_i(M_A^x(G_{a,b})\cdot S_A(G_{a,b}))_i = f(x)
\end{align*}
Notice that for all $x \ge 0$, $f(x) \le g(x) \le h(x)$. We also see that
$$\lim_{x \rightarrow 0} f(x) = \lim_{x \rightarrow 0} \min_i(M_A^x(G_{a,b})\cdot S_A(G_{a,b}))_i = \min_i(M_A(G_{a,b})\cdot S_A(G_{a,b}))_i = v_A(G_{a,b})$$
$$\lim_{x \rightarrow 0} h(x) = \lim_{x \rightarrow 0} v_A(G_{a,b}) + \max_i(xB\cdot \mathbf{1})_i = v_A(G_{a,b}) +\max_i(B\cdot \mathbf{1})_i \cdot \lim_{x \rightarrow 0} x  = v_A(G_{a,b})$$
Therefore, 
\[\lim_{x \rightarrow 0}h(x) = \lim_{x \rightarrow 0} v_A^x(G_{a,b}) = v_A(G_{a,b}).\qedhere\]
\end{proof}

This leaves (3), the proof of which is more nuanced. We must first develop some more theory of all-pay bidding games.

\subsection{Restricted Games}
In many bidding games, the random distribution governing optimal play does not involve bidding above some threshold. In a game of Bidding Tic-Tac-Toe where each player begins with 100 chips, a player should not bid 100 on the first turn. By the Reverse Theorem, the two players, have optimal strategies of equal length. Suppose in some bidding game $G_{a,b}$, both players have strategies of length $\ell$. Then we can consider the \textbf{restricted} game, $G_{a,b} \mid \ell$, where both players can bid at most $\ell - 1$ on the first turn and play returns to normal thereafter. In such a restricted game players are still able to play the length $\ell$ optimal strategy they would have employed in the original game. Is this strategy still optimal?

\begin{lemma}
\label{restrictedgame}
If $S_A, S_B$ are optimal length $\ell$ strategies in $G_{a,b}$ that provide the payoffs $v_A$ and $1-v_A$ respectively, then they are optimal in $G_{a,b} \mid \ell$ and provide the same payoffs.
\end{lemma}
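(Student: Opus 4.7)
The plan is to observe that $(S_A, S_B)$ remains a Nash equilibrium in $G_{a,b} \mid \ell$ essentially by inclusion: every legal strategy in the restricted game is also a legal strategy in the unrestricted game, and $S_A, S_B$ themselves lie in this smaller set. The proof should therefore reduce to checking the equilibrium inequalities over the smaller strategy simplex.

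First I would record that because $S_A$ and $S_B$ have length $\ell$, their supports sit inside $\{0, 1, \dots, \ell-1\}$, so each is a legal strategy in $G_{a,b}\mid\ell$. Next I would justify that the restricted game's payoff matrix agrees with the relevant $\ell \times \ell$ block of $M_A(G_{a,b})$: the restriction caps first-turn bids at $\ell-1$ but leaves subsequent play unchanged, so the value of a bid pair $(i,j)$ with $i,j \le \ell-1$ is computed from exactly the same continuation game as in $G_{a,b}$. In particular, $T_B^T M_A(G_{a,b}) T_A$ is the payoff of the bid pair $(T_A, T_B)$ in $G_{a,b}\mid\ell$ whenever $T_A, T_B$ have supports in $\{0,\dots,\ell-1\}$.

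The key step is then a direct verification of the Nash equilibrium conditions in $G_{a,b}\mid\ell$. For any $B$-strategy $T_B$ legal in the restricted game, $T_B$ is also legal in the unrestricted game, and the optimality of $S_A$ in $G_{a,b}$ yields $T_B^T M_A S_A \ge v_A$. Symmetrically, for every restricted-legal $T_A$, optimality of $S_B$ gives $S_B^T M_A T_A \le v_A$. Hence neither player can profitably deviate within the restricted strategy set, and the shared payoff evaluated at $(S_A, S_B)$ is exactly $S_B^T M_A S_A = v_A$, matching the original payoffs.

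I do not anticipate a real obstacle here; the argument is essentially monotonicity of the minimax value under a restriction that happens to preserve both the equilibrium pair and the value at which it is attained. The only subtle point to articulate carefully is the agreement of the payoff matrices on the $\ell \times \ell$ block, which follows because the continuation games are unchanged by the restriction on the first turn.
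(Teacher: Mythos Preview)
Your proposal is correct and follows essentially the same approach as the paper: both arguments observe that the restricted payoff matrix is the top-left $\ell\times\ell$ block of $M_A(G_{a,b})$, and then use that $S_A$ (resp.\ $S_B$) still guarantees at least $v_A$ (resp.\ $1-v_A$) against every strategy in the smaller simplex because it already does so in the larger one. The paper phrases this via the minimum entry of $M_A\cdot S_A$ and closes with the zero-sum constraint, while you phrase it directly as the Nash inequalities $T_B^T M_A S_A \ge v_A$ and $S_B^T M_A T_A \le v_A$, but these are the same argument.
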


\begin{proof}
$M_A(G_{a,b} \mid \ell)$ is the $\ell \times \ell$ top-left minor of $M_A(G_{a,b})$ as the games are identical after the first move. Thus, both players bidding less than $\ell$ in $G_{a,b}$ is equivalent to the players making the same bids in $G_{a,b} \mid \ell$. Thus, $M_A(G_{a,b} \mid \ell) \cdot S_A$ gives the first $\ell$ entries of $M_A(G_{a,b}) \cdot S_A$. The minimum entry of $M_A(G_{a,b}) \cdot S_A$ is $v_A$ so the minimum entry of $M_A(G_{a,b} \mid \ell) \cdot S_A$ is at least $v_A$. Thus, $S_A$ guarantees at least the payoff $v_A$. Using the same logic for $M_B(G_{a,b} \mid \ell)$, we obtain the $S_B$ guarantees the payoff at least $1-v_A$. The total payoff is exactly $1$ so player $A$ gets payoff $v_A$ and cannot do better and player $B$ gets the payoff $1-v_A$ and cannot do better.
\end{proof}

Furthermore, recall that precision is a characteristic of the successor states in a game. The possible successors of a restricted game are a subset of the successors of the normal game. Thus, if a game is precise then its restricted game is also precise. We are now able to state a powerful result for the restricted game that will allow us to prove some important results for general bidding games.

\begin{lemma}
\label{preciseinvertible}
In a precise game $G_{a,b}$, if $S_A$, an optimal strategy of minimal length, has length $\ell$, then $M_A(G_{a,b} \mid \ell)$ is invertible.
\end{lemma}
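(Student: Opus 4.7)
My plan is to argue by contradiction: assume $M := M_A(G_{a,b} \mid \ell)$ is singular and pick any nonzero $v$ in its kernel. I will use $v$ to perturb $S_A$ and derive a contradiction, with the kind of contradiction depending on whether $\sum_i v_i$ vanishes: if it does, the perturbation yields a second optimal strategy for $A$ in $G_{a,b} \mid \ell$ and contradicts uniqueness; if it does not, a suitably normalized perturbation guarantees $A$ a payoff strictly exceeding $v_A$ and contradicts the value.

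The key preliminary fact is the identity $M S_A = v_A \mathbf{1}_\ell$. By Lemma \ref{restrictedgame}, $S_A$ remains optimal in $G_{a,b} \mid \ell$ with the same value $v_A$, and by the paragraph preceding the lemma $G_{a,b} \mid \ell$ is itself a precise game. By Proposition \ref{grounded_gapfree} the strategy $S_A$ is gap-free of length $\ell$, so its first $\ell$ entries are strictly positive. Applying Theorem \ref{reverse_thm} inside the restricted game produces an optimal strategy for the opponent (namely $\mathcal{R}(S_A)$ if $A$ has advantage, or an $S_B$ with $S_A = \mathcal{R}(S_B)$ otherwise) whose first $\ell$ entries are also strictly positive. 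Lemma \ref{basics}(c) then forces $(M S_A)_i = v_A$ for every $i \in \{0,\dots,\ell-1\}$.

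Next I consider the one-parameter family $S_A + tv$. For all $|t|$ sufficiently small the vector $S_A + tv$ has nonnegative entries, and by linearity $M(S_A + tv) = v_A \mathbf{1}_\ell$ identically in $t$; its entrywise sum is $1 + t\sum_i v_i$. If $\sum_i v_i = 0$, then for any small $t \ne 0$, $S_A + tv$ is a valid probability distribution distinct from $S_A$ that is also optimal in $G_{a,b} \mid \ell$; this contradicts the uniqueness of $A$'s optimal strategy in the restricted game. The uniqueness comes from Theorem \ref{advantageunique} if $A$ has advantage; if $B$ has advantage it follows from Theorem \ref{noadvantagestrategies}, since the only other candidate optimal strategies for $A$ have length $\ell + 1$, and those are unavailable once bids are capped at $\ell - 1$.

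If instead $\sum_i v_i \ne 0$, I may assume (replacing $v$ by $-v$) that $\sum_i v_i > 0$, and set $\hat S = (S_A - tv)/(1 - t \sum_i v_i)$ for small $t > 0$. Then $\hat S$ is a valid probability distribution and $M \hat S = v_A \mathbf{1}_\ell /(1 - t\sum_i v_i) > v_A \mathbf{1}_\ell$ componentwise (using $v_A > 0$, which holds in any nondegenerate precise game). Hence $\hat S$ guarantees $A$ a payoff strictly greater than $v_A$ in $G_{a,b} \mid \ell$, contradicting Lemma \ref{restrictedgame}. The main obstacle is securing uniqueness in the restricted game when $A$ does not have advantage; this is precisely where the bid cap in $G_{a,b} \mid \ell$ earns its keep, by excluding the length-$(\ell+1)$ alternatives that Theorem \ref{noadvantagestrategies} would otherwise permit.
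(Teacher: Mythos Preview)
Your argument is correct and is essentially the paper's own proof: both perturb $S_A$ by a nonzero null vector of $M_A(G_{a,b}\mid\ell)$, split on whether the perturbation has zero coordinate sum, and obtain either a rescaled strategy beating $v_A$ or a second optimal strategy contradicting uniqueness in the (precise) restricted game. You are somewhat more explicit than the paper in first establishing $M S_A = v_A\mathbf{1}_\ell$ and in treating the case where $A$ lacks advantage, but the substance is the same.
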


\begin{proof}
Suppose by way of contradiction that there exists $y \in \mathbb{R}^\ell$ such that $M_A(G_{a,b} \mid \ell) \cdot y = 0$. Define $\bar{y} \in \mathbb{R}^a$ by $\bar{y}_i = y_i$ for $0 \le i \le \ell-1$ and $\bar{y}_i = 0$ for $i \ge \ell$. Then $M_A(G_{a,b}) \cdot \bar{y}$ is a vector with $0$ in it first $\ell$ entries. In particular, $(M_A(G_{a,b}) \cdot \bar{y})_0 = 0$. $S_A$ has all positive entries so there exists $c \in \mathbb{R}$ such that $S_+ = S_A + c\bar{y}$  and $S_- = S_A - c\bar{y}$ have all positive entries. We note that:
\[ (M_A(G_{a,b}) \cdot S_+)_i  = (M_A(G_{a,b}) \cdot S_A)_i + (M_A(G_{a,b}) \cdot c\bar{y})_i = v_A + 0 = v_A \]
\[ (M_A(G_{a,b}) \cdot S_-)_i  = (M_A(G_{a,b}) \cdot S_A)_i - (M_A(G_{a,b}) \cdot c\bar{y})_i = v_A + 0 = v_A \]
for $0 \le i \le \ell-1$. 
Suppose the sum of the entries of $S_+$ is less than $1$. Then there exists $k > 1$ such that the sum of the entries of $kS_+$ is equal to $1$. Then $kS_+$ is a valid strategy for player $A$. that gives payoff $kv_A > v_A$ against player $B$'s first $\ell$ pure strategies. Thus, against $\mathcal{R}(S_A)$, $kS_+$ is better than $S_A$ so $(S_A, \mathcal{R}(S_A))$ is not a Nash equilibrium. Contradiction. Then suppose the sum of the entries of $S_+$ is greater than $1$. Then the sum of the entries of $S_-$ is less than $1$ so the same argument holds. Then suppose the the sum of the entries of $S_+$ equals $1$. Then $S_+$ and $S_A$ are optimal in $G_{a,b} \mid \ell$. $G_{a,b} \mid \ell$ is precise, however, so there exists only one optimal strategy of minimal length for either player in $G_{a,b} \mid \ell$. Therefore $y$ must equal $0$. 
\end{proof}

A method for computing optimal strategies for the player with advantage, say player $A$, now becomes apparent. Given the length of the player's unique optimal strategy we can consider the payoff matrix of the restricted game. By the Reverse Theorem, player $B$ has a gap-free strategy of the same length. Then the restricted payoff matrix multiplied by player $A$'s optimal strategy must give a constant vector. The inverse of our restricted payoff matrix multiplied by some non-zero constant vector will therefore give a scalar multiple of player $A$'s optimal strategy. 

\begin{theorem}
\label{strategyformula}
Let player $A$ have advantage. In a precise game $G_{a,b}$ if $S_A$ has length $\ell$ then
\[ S_A = \frac{M_A(G_{a,b} \mid \ell)^{-1} \mathbf{1}}{\mathbf{1}^T M_A(G_{a,b} \mid \ell)^{-1} \mathbf{1}} \]
\end{theorem}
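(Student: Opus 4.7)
The plan is to restrict attention to the first $\ell$ coordinates of $S_A$ (the nonzero portion) and turn the optimality conditions into an invertible linear system. Write $\tilde S_A \in \mathbb{R}^\ell$ for the vector of these nonzero entries; since $S_A$ is a length-$\ell$ strategy, appending $a - \ell + 1$ zeros recovers $S_A$, and $\mathbf{1}^T \tilde S_A = 1$.

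First I would pin down the set of pure-strategy indices for player $B$ on which equality $(M_A S_A)_i = v_A$ must hold. By the Reverse Theorem, $\mathcal{R}(S_A)$ is an equilibrium strategy for player $B$, and by Proposition \ref{grounded_gapfree} it is gap-free and its length equals that of $S_A$, namely $\ell$; thus $(\mathcal{R}(S_A))_i \neq 0$ for all $0 \le i \le \ell - 1$. Applying Lemma \ref{basics}(c) to the Nash equilibrium $(S_A, \mathcal{R}(S_A))$ then gives $(M_A S_A)_i = v_A$ for every $i \in \{0, 1, \dots, \ell - 1\}$.

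Next I would translate these $\ell$ equations into the restricted game. Since only the first $\ell$ entries of $S_A$ are nonzero, the matrix-vector product $(M_A(G_{a,b}) S_A)_i$ for $i < \ell$ only involves the top-left $\ell \times \ell$ block of $M_A(G_{a,b})$, which by definition is $M_A(G_{a,b} \mid \ell)$. Consequently,
\[
M_A(G_{a,b} \mid \ell)\, \tilde S_A \;=\; v_A\, \mathbf{1}.
\]
By Lemma \ref{preciseinvertible}, $M_A(G_{a,b} \mid \ell)$ is invertible, so $\tilde S_A = v_A\, M_A(G_{a,b} \mid \ell)^{-1} \mathbf{1}$.

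To finish, I would use the normalization $\mathbf{1}^T \tilde S_A = 1$. Multiplying the previous display on the left by $\mathbf{1}^T$ yields $1 = v_A \cdot \mathbf{1}^T M_A(G_{a,b} \mid \ell)^{-1} \mathbf{1}$, hence $v_A = \bigl(\mathbf{1}^T M_A(G_{a,b} \mid \ell)^{-1} \mathbf{1}\bigr)^{-1}$, which also forces the denominator to be nonzero. Substituting this back gives the claimed formula, and padding with zeros returns $S_A$ in $\mathbb{R}^{a+1}$. There is no real obstacle here: the entire content of the theorem is packed into the earlier lemmas, and the only care needed is verifying that $\mathcal{R}(S_A)$ really has support on all of $\{0, \dots, \ell-1\}$ so that Lemma \ref{basics}(c) produces exactly $\ell$ equations — enough to uniquely determine $\tilde S_A$ once invertibility is in hand.
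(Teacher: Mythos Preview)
Your proposal is correct and follows essentially the same route as the paper: use the Reverse Theorem to produce a length-$\ell$ equilibrium strategy for $B$ with full support on $\{0,\dots,\ell-1\}$, invoke Lemma~\ref{basics}(c) to get $M_A(G_{a,b}\mid\ell)\,\tilde S_A = v_A\mathbf{1}$, invert via Lemma~\ref{preciseinvertible}, and normalize. One tiny remark: the fact that $\mathcal{R}(S_A)$ has length exactly $\ell$ and full support on $\{0,\dots,\ell-1\}$ comes directly from the definition of the reverse together with Proposition~\ref{grounded_gapfree} applied to $S_A$ (which gives $s_i>0$ for $0\le i\le\ell-1$), not from applying Proposition~\ref{grounded_gapfree} to $\mathcal{R}(S_A)$ itself---but this does not affect the argument.
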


\begin{proof}
As discussed above $M_A(G_{a,b} \mid \ell)^{-1} \cdot \mathbf{1}$ is a scalar multiple of $S_A$. The sum of the entries of $S_A$ is $1$ so we need only divide by the sum of the entries of $M_A(G_{a,b} \mid \ell)^{-1} \cdot \mathbf{1}$. This is given by $\mathbf{1}^T M_A(G_{a,b} \mid \ell)^{-1} \mathbf{1}$. 
\end{proof}

This theorem gives an explicit and rapid method for computing optimal strategies for a player with advantage. Combined with the Reverse Theorem, we will be able to develop a method for computing optimal strategies for both players in any simple bidding game. First, we will return to (3) of Theorem \ref{convergence}.

\subsection{Convergence of Strategies}
Recall our conjecture that as $x \rightarrow 0$, $S_A(G_{a,b}^x) \rightarrow S_A(G_{a,b})$. The above theorem gives even more weight to this claim as together they give a method for approximating optimal strategies for imprecise games via a convergent sequence of strategies for precise games.

We begin by partially extending the invertibility of the restricted payoff matrix to imprecise games. The importance of this result is not immediately obvious, but it will be integral to the proof of part (3) of Theorem \ref{convergence}. For simplicity, we will sometimes write $M_A(G_{a,b} \mid \ell)$ as $M_A(\ell)$ and $M_B(G_{a,b} \mid \ell)$ as $M_B(\ell)$.

\begin{prop}
\label{wow}
If player $A$ has a length $\ell$ optimal strategy for $M_A^x = M_A(G^x_{a,b})$ then at least one of $M_A(G_{a,b} \mid \ell)$ and $M_B(G_{a,b} \mid \ell) = \mathbf{1} - M_A(G_{a,b} \mid \ell)^T$ is invertible.
\end{prop}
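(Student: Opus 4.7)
The plan is to argue by contradiction: assume both $M_A(\ell)$ and $M_B(\ell) = \mathbf{1} - M_A(\ell)^T$ are singular, and derive a contradiction from the structure of the precise companion game $G^x_{a,b}$. First I would collect the tools from that game. Write $\sigma_A^x, \sigma_B^x \in \mathbb{R}^\ell$ for the nonzero initial segments of $S_A^x$ and of its reverse $\mathcal{R}(S_A^x)$. By Proposition \ref{grounded_gapfree} both are strictly positive probability vectors, and the equilibrium conditions (Lemma \ref{basics}) give
\[ M_A^x(\ell)\,\sigma_A^x = v_A^x \mathbf{1}, \qquad M_B^x(\ell)\,\sigma_B^x = v_B^x \mathbf{1}, \]
where $M_A^x(\ell) = M_A(\ell) + xB(\ell)$ and $M_B^x(\ell) = M_B(\ell) - xB(\ell)^T$. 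Since $G^x_{a,b}$ is precise and $\sigma_A^x, \sigma_B^x$ are the unique minimal-length optima (Theorem \ref{advantageunique} and Theorem \ref{reverse_thm}), Lemma \ref{preciseinvertible} applied symmetrically to both players makes both $M_A^x(\ell)$ and $M_B^x(\ell)$ invertible. Consequently $(M_A^x(\ell))^{-1}\mathbf{1} = \sigma_A^x / v_A^x$ and $(M_B^x(\ell))^{-1}\mathbf{1} = \sigma_B^x / v_B^x$ are strictly positive vectors --- this is exactly the content of Theorem \ref{strategyformula} in our setting.

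Next I would extract nullspace vectors: choose nonzero $y$ with $M_A(\ell) y = 0$ and nonzero $z$ with $M_B(\ell) z = 0$, the latter equivalent to $M_A(\ell)^T z = (\sum_i z_i)\mathbf{1}$. Left-multiplying $M_A(\ell)y = 0$ by $z^T$ gives the pivotal identity
\[ 0 \;=\; z^T M_A(\ell) y \;=\; \Bigl(\sum_i z_i\Bigr)\Bigl(\sum_j y_j\Bigr),\]
so at least one of the coordinate sums must vanish. Now the explicit affine form $B(\ell)_{ij} = a + i - j$ enters. In the case $\sum_j y_j = 0$, direct computation yields $B(\ell)y = -c\,\mathbf{1}$ with $c = \sum_j j y_j$, hence $M_A^x(\ell) y = -xc\,\mathbf{1}$. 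If $c = 0$ this puts the nonzero $y$ in $\ker M_A^x(\ell)$, contradicting invertibility; if $c \neq 0$ then $y$ is a nonzero scalar multiple of the strictly positive vector $(M_A^x(\ell))^{-1}\mathbf{1}$, so all entries of $y$ share a sign, contradicting $\sum_j y_j = 0$ with $y \neq 0$. The case $\sum_i z_i = 0$ is entirely analogous: one computes $B(\ell)^T z = d\,\mathbf{1}$ with $d = \sum_i i z_i$, so that $M_B^x(\ell) z = -xd\,\mathbf{1}$, and the same dichotomy contradicts either the invertibility of $M_B^x(\ell)$ or the strict positivity of $(M_B^x(\ell))^{-1}\mathbf{1}$.

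The hard part is locating the algebraic bridge that uses both singularity hypotheses at once. The pairing $z^T M_A(\ell) y$ plays that role, converting the two null vectors into a vanishing product of coordinate sums. Once one is forced into a sum-zero kernel vector, the affine Toeplitz structure of $B(\ell)$ collapses its action on sum-zero vectors to a scalar multiple of $\mathbf{1}$, which is exactly the direction pinned down positively by the length-$\ell$ optimal strategies through Theorem \ref{strategyformula}. Everything else is routine bookkeeping.
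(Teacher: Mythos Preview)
Your argument is correct and takes a genuinely different route from the paper. The paper assumes only that $M_A(\ell)$ is singular, picks $w\in\ker M_A(\ell)$, and first shows $\sum_i w_i\neq 0$ by essentially your Case-1 reasoning (a sum-zero kernel vector would either lie in $\ker M_A^x(\ell)$ or yield a strictly better strategy than $S_A^x$). It then normalises $w$ to unit sum and, through a somewhat ad hoc construction, builds an explicit matrix $R$ with $(I+R)\,M_B(\ell)=M_B^x(\ell)$; invertibility of the right-hand side then forces $M_B(\ell)$ to be invertible. Your approach is more symmetric: assuming both matrices singular and pairing the two null vectors via $z^T M_A(\ell)\,y$ forces one coordinate sum to vanish, after which the contradiction falls out directly from the strict positivity of $(M^x(\ell))^{-1}\mathbf{1}$, bypassing the factorisation entirely. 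Both proofs ultimately rest on the same observation---that on sum-zero vectors the perturbation $xB(\ell)$ collapses to a scalar multiple of $\mathbf{1}$---but yours extracts the conclusion with less machinery.

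Two minor remarks. First, your displayed identity $M_B^x(\ell)=M_B(\ell)-xB(\ell)^T$ is off by a constant multiple of the all-ones matrix (the total payoff in $G^x_{a,b}$ is $1+x(a+b)$, not $1$); since constant matrices annihilate sum-zero vectors, your computation $M_B^x(\ell)\,z=-xd\,\mathbf{1}$ is unaffected and the argument survives unchanged. Second, invoking Lemma~\ref{preciseinvertible} ``symmetrically'' for the player without advantage needs the uniqueness of that player's \emph{minimal-length} equilibrium (Theorem~\ref{noadvantagestrategies}(1)) in place of Theorem~\ref{advantageunique}; the extension is routine but deserves a word.
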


\begin{proof}
For simplicity, let $M_A = M_A(G_{a,b} \mid \ell)$ and $M_B = M_B(G_{a,b} \mid \ell)$. If $M_A$ is invertible, we are done, so suppose $M_A$ is not invertible. Let $w \ne 0$ be in the nullspace of $M_A$. 
Because $S_A^x$ is gap-free, there exists $c > 0$ sufficiently small such that $S_A^x \pm cw$ are valid strategies for player $A$. Then
\[
M_A^x \cdot (S_A^x \pm cw) = M_A^xS_A^x \pm (M_A + xB)\cdot cw = M_A^xS_A^x \pm cxBw.
\]
Each successive row in $B$ is 1 greater in each entry than the previous row. Suppose that the sum of the entries of $w$ is equal to $0$. Then,
\[ (Bw)_{i+1}=(Bw)_i+(1, \ldots, 1)\cdot w = (Bw)_i \] Thus, $Bw$ is a constant vector. If $Bw  = 0$ then
\[ M_A^x \cdot w = M_Aw + xBw = 0.\]
By Lemma $\ref{preciseinvertible}$, $M_A^x$ is invertible so $Bw$ cannot equal $0$. Therefore, either $S_A^x + cw$ or $S_A^x - cw$ results in a better payoff for player $A$ than $S_A^x$ for $M_A^x$ contradicting the optimality of $S_A^x$. Therefore the sum of the entries of $w$ is not $0$.

We can then let the sum of the entries of $w$ be equal to $1$. Then
\[w^T(\mathbf{1}-M_A^T) = (1, \ldots, 1).\]
We will return to $w$ momentarily. We can compute that $M_B^x = \mathbf{1}-M_A^T+xB$.  Since $S_A^x$ is a Nash equilibrium, 
\[ (S_A^x)^T \cdot M_B^x = (v, \ldots, v) \]
\[ (S_A^x)^T \cdot (\mathbf{1}-M_A(\ell)^T) + x(S_A^x)^TB-(v, \ldots, v) = 0 \]
\[ (S_A^x)^T \cdot (\mathbf{1}-M_A(\ell)^T) +(d+x(\ell-1)-v, d+x(\ell-2)-v, \ldots, d-v) = 0 \]
where $d=x(S_A^x)^T\cdot(0, 1, \ldots, \ell-1)^T$. Then we can substitute $w$ into the equation:
\begin{align*}
x(\ell-1, \ldots, 1, 0) &= -(S_A^x)^T\cdot(\mathbf{1}-M_A(\ell)^T)+(v-d)(1, \ldots, 1) \\
&= -(S_A^x)^T\cdot(\mathbf{1}-M_A(\ell)^T)+(v-d)w^T(\mathbf{1}-M_A(\ell)^T)\\
&= (-(S_A^x)^T+(v-d)w^T)(\mathbf{1}-M_A(\ell)^T).
\end{align*}
Let $r_0=x(-(S_A^x)^T+(v-d)w^T)$ and $r_i=r_0+xiw^T$ so that:
 \[ r_i(\mathbf{1}-M_A(\ell)^T)=x(\ell-1+i, \ldots, 1+i, i) \]
Let $R$ be a $\ell \times \ell$ matrix with rows $r_0, \ldots, r_{\ell-1}$. Then
\[ R(\mathbf{1}-M_A(\ell)^T)=xB. \]
Therefore through this seemingly arbitrary construction we obtain that 
\[ (I_{\ell \times \ell}+R)(\mathbf{1}-M_A(\ell)^T)=\mathbf{1}-M_A(\ell)^T+xB=M_B^x\]
which is invertible by Lemma \ref{preciseinvertible}. Thus, $\mathbf{1}-M_A(\ell)^T$ is invertible.
\end{proof}

The last several results have dealt with payoff matrices of restricted games. The payoff matrix of a restricted game is, by definition, dependent on the length of a player's optimal strategy. The following lemma further demonstrates the relevance of the lengths of the players' optimal strategies.

\begin{lemma}
\label{weakconvergence}
If there exists $\ell_0$ and $x_1 > x_0 \ge 0$ such that for all $x_0 < x < x_1$, $\ell(S_A(G_{a,b}^x)) = \ell_0$ then
\[ \lim_{x \rightarrow x_0} S_A(G_{a,b}^x) \]
exists and is an optimal strategy.
\end{lemma}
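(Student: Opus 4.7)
The plan is to leverage the explicit formula from Theorem \ref{strategyformula}. For each $x \in (x_0, x_1)$ the game $G_{a,b}^x$ is precise (since $x > 0$), and $S_A(G_{a,b}^x)$ has length $\ell_0$ by hypothesis, so
\[
S_A(G_{a,b}^x) \;=\; \frac{M_A^x(\ell_0)^{-1}\mathbf{1}}{\mathbf{1}^T M_A^x(\ell_0)^{-1}\mathbf{1}},
\]
where $M_A^x(\ell_0)$ abbreviates $M_A(G_{a,b}^x\mid \ell_0) = M_A(G_{a,b}\mid \ell_0) + xB(\ell_0)$, the top-left $\ell_0\times\ell_0$ minor of $M_A^x$. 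Since the entries of $M_A^x(\ell_0)$ are affine in $x$, if $M_A^{x_0}(\ell_0)$ happens to be invertible then continuity of matrix inversion makes the right-hand side converge as $x\to x_0$ to the same expression evaluated at $x_0$, and we are done.

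The main obstacle is therefore the possibility that $M_A^{x_0}(\ell_0)$ is singular. To deal with this I would apply Proposition \ref{wow} with the ``base'' game taken to be $G_{a,b}^{x_0}$ rather than $G_{a,b}$, using $\epsilon = x - x_0 > 0$ in place of $x$. The proof of Proposition \ref{wow} only uses preciseness of the shifted game (via Lemma \ref{preciseinvertible}), and $G_{a,b}^{x_0+\epsilon}$ is precise for every $\epsilon > 0$, so its algebraic argument applies verbatim and yields: at least one of $M_A^{x_0}(\ell_0)$ and $M_B^{x_0}(\ell_0) = \mathbf{1} - M_A^{x_0}(\ell_0)^T$ is invertible.

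In the remaining case $M_B^{x_0}(\ell_0)$ is invertible but $M_A^{x_0}(\ell_0)$ is not. For $x \in (x_0, x_1)$ the Reverse Theorem gives $S_B(G_{a,b}^x) = \mathcal{R}(S_A(G_{a,b}^x))$, which therefore also has length $\ell_0$, and the symmetric version of the derivation behind Theorem \ref{strategyformula} (using $M_B^x(\ell_0) S_B(G_{a,b}^x) = v_B^x \mathbf{1}$, which is exactly what the proof of the Reverse Theorem established) produces
\[
S_B(G_{a,b}^x) \;=\; \frac{M_B^x(\ell_0)^{-1}\mathbf{1}}{\mathbf{1}^T M_B^x(\ell_0)^{-1}\mathbf{1}}.
\]
Continuity in $x$ now makes the right-hand side converge as $x\to x_0$, and applying the continuous (linear) involution $\mathcal{R}$ recovers convergence of $S_A(G_{a,b}^x)$.

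Finally, to show the limit $S^\star := \lim_{x\to x_0} S_A(G_{a,b}^x)$ is an optimal strategy in $G_{a,b}^{x_0}$, I would pass the componentwise optimality inequality $M_A^x \cdot S_A(G_{a,b}^x) \ge v_A^x \cdot \mathbf{1}$ to the limit. Continuity of $v_A^x$ in $x$ follows from the minimax theorem applied to the continuous family $M_A^x$ (the min-max of a jointly continuous function over the compact simplex is continuous in the parameter), so the limit yields $M_A^{x_0} \cdot S^\star \ge v_A^{x_0} \cdot \mathbf{1}$, showing that $S^\star$ guarantees player $A$ at least her optimal payoff in $G_{a,b}^{x_0}$. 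The step I expect to require the most care is verifying that the proof of Proposition \ref{wow} really does go through for the shifted base game $G_{a,b}^{x_0}$; this is essentially a rereading, but ensures the two-case split above is exhaustive.
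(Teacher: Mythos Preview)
Your proposal is correct and follows essentially the same route as the paper's proof: the paper sets $F_{a,b}=G_{a,b}^{x_0}$, invokes Proposition~\ref{wow} to ensure one of $M_A(F_{a,b}\mid\ell_0)$, $M_B(F_{a,b}\mid\ell_0)$ is invertible, applies the formula of Theorem~\ref{strategyformula} (to $A$ or, via the Reverse Theorem, to $B$) and passes to the limit by continuity, then checks optimality using $v_A=\lim_{x\to x_0} v_A^x=\min(M_A\cdot S)$. Your only cosmetic differences are invoking the minimax theorem for continuity of $v_A^x$ (the paper relies on the argument of Theorem~\ref{convergence}(2)) and being more explicit about why Proposition~\ref{wow} transfers to the shifted base game.
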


\begin{proof}
Let $F_{a,b} = G_{a,b}^{x_0}$. We will treat $F_{a,b}$ as imprecise so the proof holds for both precise and imprecise games. By Proposition \ref{wow}, at least one of $M_A = M_A(F_{a,b} \mid \ell_0)$ and $M_B = M_B(F_{a,b} \mid \ell_0)$ is invertible. Suppose $M_A$ is invertible. Then the limit
\[ \lim_{x \rightarrow 0} S_A(F_{a,b}^x) = \lim_{x \rightarrow 0} \frac{(M_A^x)^{-1} \mathbf{1}}{\mathbf{1}^T (M_A^x)^{-1} \mathbf{1}} = \frac{(M_A)^{-1} \mathbf{1}}{\mathbf{1}^T (M_A)^{-1} \mathbf{1}} = S\]
exists. As $x$ goes to 
$0$, $S_A^x(F_{a,b})$ is nonzero and has entry wise sum of $1$. Thus, $S$ is all nonnegative and also has entry wise sum of $1$. Finally,
\[ v_A = \lim_{x\rightarrow 0} v_A^x = \lim_{x \rightarrow 0} \min(M_A^x \cdot S_A^x) = \min(M_A \cdot S) \]
Thus $S$ is optimal. If $M_A$ is not invertible, then $M_B$ is invertible. By the Reverse Theorem, for all $x_0 < x < x_1$, $\ell(S_B(G_{a,b}^x)) = \ell$. Therefore, we can apply the same argument as above to $S_B^x(G_{a,b})$. 
\end{proof}

While the above lemma's potential power is clear, we have not yet demonstrated that the conditions it requires are met by any games. We need some restrictions on the length of optimal strategies as we adjust chip value in order to effectively use the above results. The next lemma and its corollary give us the necessary structure.

\begin{lemma}
\label{openneighborhoods}
Let player $A$ have advantage. Let $\ell_0 = \max_{x \in \mathbb{R}_{>0}} \ell(S_A^x)$. The set of $p$ such that $\ell(S_A^p) = \ell_0$ is open in $\mathbb{R}_{>0}$. 
\end{lemma}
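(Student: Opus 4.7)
The plan is to argue by contradiction via a compactness argument, with Theorem~\ref{advantageunique} providing the knockout blow. The key observation is that in a precise game the optimal strategy is uniquely determined, so any limit of optimal strategies along a sequence $p_k \to p > 0$ is forced to equal $S_A^p$; if the lengths along the sequence were strictly smaller than $\ell_0 = \ell(S_A^p)$, the lengths would not match and we obtain a contradiction.

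First I would suppose toward contradiction that the set is not open at some $p \in \mathbb{R}_{>0}$ with $\ell(S_A^p) = \ell_0$. Then there is a sequence $p_k \to p$ in $\mathbb{R}_{>0}$ with $\ell(S_A^{p_k}) \neq \ell_0$, which by the maximality of $\ell_0$ forces $\ell(S_A^{p_k}) < \ell_0$. Since the length takes integer values in $\{1, \dots, a+1\}$, pigeonhole lets me pass to a subsequence on which $\ell(S_A^{p_k}) = \ell_1$ is constant for some $\ell_1 < \ell_0$. By Proposition~\ref{grounded_gapfree} the support of $S_B^{p_k}$ lies in $\{0, 1, \dots, \ell_1\}$. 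So both sequences live in compact subsimplices, and after further extractions I may assume $S_A^{p_k} \to S^*$, $S_B^{p_k} \to S_B^*$, and $v_A^{p_k} \to V$, where the support of $S^*$ lies in $\{0, \dots, \ell_1 -1\}$ and hence $\ell(S^*) \le \ell_1 < \ell_0$.

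Next I would show $S^*$ is optimal in $G_{a,b}^p$. Since $M_A^x = M_A + x B_{a,b}$ is linear (hence continuous) in $x$, the optimality inequalities $(M_A^{p_k} S_A^{p_k})_i \ge v_A^{p_k}$ and $(M_B^{p_k} S_B^{p_k})_i \ge 1 - v_A^{p_k}$ pass to the limits to yield $(M_A^p S^*)_i \ge V$ and $(M_B^p S_B^*)_i \ge 1 - V$ for all $i$. These mean $S^*$ guarantees player $A$ payoff at least $V$ and $S_B^*$ guarantees player $B$ payoff at least $1-V$ in $G_{a,b}^p$, so $v_A^p \ge V$ and $v_B^p \ge 1 - V$. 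Because $v_A^p + v_B^p = 1$, equality holds: $V = v_A^p$, so $(M_A^p S^*)_i \ge v_A^p$ and $S^*$ is optimal. Since $p > 0$, $G_{a,b}^p$ is precise, so Theorem~\ref{advantageunique} gives $S^* = S_A^p$, yielding the contradiction $\ell_0 = \ell(S_A^p) = \ell(S^*) \le \ell_1 < \ell_0$.

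The main obstacle is rigorously showing that the limit $S^*$ is truly optimal in $G_{a,b}^p$: the paper has not yet established continuity of $v_A^x$ in $x$ (the earlier convergence result only addresses $x \to 0$). The trick is therefore to limit both players' strategies simultaneously and use the zero-sum identity $v_A^p + v_B^p = 1$ to pin the limiting value $V$ down to $v_A^p$, rather than appealing to external matrix-game folklore. Everything else is a straightforward compactness argument combined with the precision of $G^p$ for $p > 0$ (which is what makes Theorem~\ref{advantageunique} available).
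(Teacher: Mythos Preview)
Your argument is correct and follows the same skeleton as the paper: suppose the set is not open at $p$, extract a subsequence $p_k\to p$ with constant length $\ell_1<\ell_0$, show the limiting strategy is optimal in $G_{a,b}^{p}$ with length at most $\ell_1$, and contradict Theorem~\ref{advantageunique}.

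Where you differ is in the middle step. The paper uses the explicit formula from Theorem~\ref{strategyformula}, writing $S_A^{p_k}=(M_A^{p_k}(\ell_1))^{-1}(v_A^{p_k}\mathbf{1})$ and passing the matrix inverse through the limit to obtain $S=(M_A^{p}(\ell_1))^{-1}(v_A^{p}\mathbf{1})$. You instead run a pure compactness argument on both players' strategies simultaneously and use the zero-sum identity to force $V=v_A^{p}$. Your route is more elementary: it avoids invoking Lemma~\ref{preciseinvertible} and Theorem~\ref{strategyformula}, and it sidesteps a point the paper leaves implicit, namely why $M_A^{p}(\ell_1)$ should be invertible when $\ell_1$ is \emph{not} the optimal length at $p$. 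The paper's route, on the other hand, identifies the limit explicitly and ties the lemma into the matrix-inversion machinery used elsewhere.

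One small slip to fix: in the adjusted game $G_{a,b}^{x}$ the total utility is $1+(a+b)x$, not $1$ (the paper notes this just after defining $M_A^x$). So the correct inequalities are $(M_B^{p_k}S_B^{p_k})_i\ge v_B^{p_k}=1+(a+b)p_k-v_A^{p_k}$, and in the limit $v_A^{p}+v_B^{p}=1+(a+b)p$. Your squeeze argument goes through verbatim with this correction.
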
 

\begin{proof}
The length of $S_A^x$ is an integer and is bounded above by $a$. Hence $\ell_0$ exists. Pick some $p$ so that $\ell(S_A^p) = \ell_o$. Suppose there exists no $\epsilon, \delta > 0$ such that for all $p' \in N_{\epsilon, \delta}(p) = (p-\epsilon, p+\delta)$, we have $\ell(S_A^{p'}) = \ell_0$. Then we can define a sequence $\{x_k\} \rightarrow p$ by $x_k \in N_{1/k,1/k}(p)$ so that $\ell(S_A^{x_k}) < \ell_0$. There exist only a finite number of possible values for $\ell(S_A^x)$ so there must be at least one $\ell_1 < \ell_0$ such that$\{x_k\}$ has a convergent subsequence $\{x_{a_k}\}$ with $\ell(S_A^{x_{a_k}}) = \ell_1$ for all $k$.

By Theorem \ref{convergence},
\[ \lim_{k \rightarrow \infty} M_A^{x_{a_k}}(G_{a,b} \mid \ell_1 ) = M_A^p(G_{a,b} \mid \ell_1),\]
\[ \lim_{k \rightarrow \infty} v_A^{x_{a_k}}(G_{a,b} \mid \ell_1 ) = v_A^p(G_{a,b} \mid \ell_1).\]
Then
\[
\lim_{k\rightarrow \infty} S_A^{x_{a_k}}=
\lim_{k \rightarrow \infty} ((M_A^{x_{a_k}}(\ell_1))^{-1} \cdot (v_A^{x_{a_k}}\mathbf{1}_{\ell_1}))=
(M_A^p(\ell_1))^{-1}\cdot (v_A^p\mathbf{1_{\ell_1}}) = S
\]
for which we have that
\[ \min(M_A^p \cdot S) =  \lim_{k \rightarrow \infty} \min(M_A^{x_{a_k}} \cdot S_A^{x_{a_k}}) = \lim_{k \rightarrow \infty} v_A^{x_k} = v_A^p \]
$S$ is then an optimal strategy in $G^p_{a,b}$. $S$ is the limit of length $\ell_1$ strategies so it has length at most $\ell_1$. Therefore $S \ne S_A^p$. The player with advantage has exactly one optimal strategy so an appropriate open neighborhood must exist.
\end{proof}

\begin{lemma}
Let $\ell_0$ be as above and let $M_A$ be invertible. Also assume $S_A^x$ is of constant-length on some interval $(a,b)$. Then there exists vectors $S,T$ such that for all $x\in (a,b)$,
\[ S_A^x = S + xT. \]
\end{lemma}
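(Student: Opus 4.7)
The plan is to reduce the defining equations for $S_A^x$ to a linear system whose dependence on $x$ is affine, and then exploit the low-rank structure of $B$ to solve it in closed form.

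Since $S_A^x$ has constant length $\ell_0$ on the interval, the Reverse Theorem produces an equilibrium strategy $\mathcal{R}(S_A^x)$ for player $B$ with full support $\{0,\dots,\ell_0-1\}$. Lemma~\ref{basics}(c) then yields $(M_A^x S_A^x)_i = v_A^x$ for all $0 \le i \le \ell_0-1$, so, restricting attention to the first $\ell_0$ coordinates, $S_A^x$ satisfies the pair of conditions
\[
  M_A(\ell_0)\, S_A^x + x\, B(\ell_0)\, S_A^x = v_A^x\, \mathbf{1}, \qquad \mathbf{1}^T S_A^x = 1,
\]
where $M_A(\ell_0) := M_A(G_{a,b}\mid\ell_0)$ is invertible by hypothesis and $B(\ell_0)$ denotes the top-left $\ell_0\times\ell_0$ block of $B_{a,b}$.

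The crucial observation is the rank-two decomposition $B(\ell_0) = a\,\mathbf{1}\mathbf{1}^T + u\,\mathbf{1}^T - \mathbf{1}\,u^T$, with $u = (0,1,\dots,\ell_0-1)^T$, which is immediate from the entrywise formula $B(\ell_0)_{ij} = a + i - j$. Combined with $\mathbf{1}^T S_A^x = 1$, this produces $B(\ell_0)\,S_A^x = u + (a - u^T S_A^x)\,\mathbf{1}$, collapsing the equilibrium equation to
\[
  M_A(\ell_0)\, S_A^x = \mu_x \mathbf{1} - x\, u
\]
for a single scalar $\mu_x := v_A^x - x(a - u^T S_A^x)$. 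Setting $p := M_A(\ell_0)^{-1}\mathbf{1}$ and $q := M_A(\ell_0)^{-1} u$, inversion gives $S_A^x = \mu_x\, p - x\, q$; the normalization $\mathbf{1}^T S_A^x = 1$ then pins down $\mu_x = (1 + x\,\mathbf{1}^T q)/(\mathbf{1}^T p)$, which is affine in $x$. Substituting back and separating constant and linear-in-$x$ terms yields the claimed $S_A^x = S + xT$ with $S = p/(\mathbf{1}^T p)$ and $T = (\mathbf{1}^T q)\,p/(\mathbf{1}^T p) - q$.

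The main obstacle is recognizing the low-rank structure of $B(\ell_0)$; once that is visible, the rest is linear algebra. The only subtle verification is that $\mathbf{1}^T p \neq 0$, which is forced by the existence of a normalized solution $S_A^x$ throughout the interval --- if $\mathbf{1}^T p$ vanished, the constraint $\mathbf{1}^T S_A^x = 1$ could hold for at most one value of $x$, contradicting constant length on a nondegenerate interval.
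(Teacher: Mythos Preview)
Your argument is correct and reaches the same conclusion as the paper, with the same explicit formula $S = M_A(\ell_0)^{-1}\mathbf{1}/(\mathbf{1}^T M_A(\ell_0)^{-1}\mathbf{1})$. The route is somewhat different, however. The paper does not write down the rank-two decomposition of $B(\ell_0)$; instead it introduces $S$ first, expands the product $(M_A^x - M_A)(S_A^x - S) = xB(S_A^x - S)$, and observes that consecutive rows of $B$ differ by an all-ones vector, so $B(S_A^x - S)$ is a constant vector because $\mathbf{1}^T(S_A^x - S)=0$. From there it manipulates until $M_A S_A^x + xBS$ is shown to be a constant vector, inverts, and determines the remaining scalar by normalization. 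Your explicit decomposition $B(\ell_0) = a\,\mathbf{1}\mathbf{1}^T + u\,\mathbf{1}^T - \mathbf{1}\,u^T$ encodes the same structural fact but uses it more directly: it collapses the equilibrium equation to $M_A(\ell_0)S_A^x = \mu_x\mathbf{1} - xu$ in one step, avoiding the auxiliary product and the intermediate cancellations. The payoff is a shorter and more transparent derivation, and your formulas for $T$ agree with the paper's $T = rS - M_A^{-1}BS$ once one substitutes $BS = u + (a - u^TS)\mathbf{1}$. Your justification that $\mathbf{1}^T p \neq 0$ is also a small addition over the paper, which silently divides by $\mathbf{1}^T M_A^{-1}\mathbf{1}$ without comment.
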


\begin{proof}
Let 
\[ S = \frac{M_A^{-1}\mathbf{1}}{\mathbf{1}^TM_A^{-1}\mathbf{1}}. \]
Note that $S$ is not necessarily optimal or even a valid strategy. It satisfies two notable properties. The sum of the entries of $S$ is $1$ and $M_AS$ is a constant vector. Consider,
\[ (M_A^x - M_A)(S_A^x - S) = xB(S_A^x - S)\]
$xB(S_A^x - S)$ is a constant vector as each row in $B$ differs by a vector of all $1$'s from the row above it. A vector of all $1$'s multiplied by $S_A^x - S$ is $0$ as both $S_A^x$ and $S$ have entrywise sum of $1$. Let this constant vector be denoted $\mathbf{u}$. Then
\[ (M_A^x - M_A)(S_A^x - S) = \mathbf{u} \]
\[ M_A^xS_A^x + M_AS - M_AS_A^x - M_A^xS = \mathbf{u} \]
\[ M_A^x S_A^x + M_A S - M_A S_A^x - M_AS - xBS = \mathbf{u}\]

Note the $M_A S$ terms cancel, and that $M_A^x S_A^x$ is a constant vector. Thus, because $\mathbf{u}$ is also a constant vector, we know that $M_A S_A^x + xBS$ is a constant vector, which we call $\mathbf{v}$.
Then
\[ M_AS_A^x + xBS = \mathbf{v} \]
\[ S_A^x = (M_A)^{-1}(\mathbf{v} - xBS) \]
Note that $M_A^{-1}\mathbf{v}$ is a scalar multiple of $S$. Let this scalar be $c$. We have the relation:
\[ S_A^x = cS - xM_A^{-1}BS \]
We see that $c$ is a function of $x$, and must be the unique scalar that causes $cS - xM_A^{-1}BS$ to have entrywise sum of $1$. Thus $c$ is given by:
\[ \sum_{i=0}^a (cS - xM_A^{-1}BS)_i = 1 \]
\[ c\sum_{i=0}^a S = 1 + x\sum_{i=0}^a(M_A^{-1}BS)_i \]
\[ c = 1 + x\sum_{i=0}^a(M_A^{-1}BS)_i \]
$\sum_{i=0}^a(M_A^{-1}BS)$ is a constant because $M_A, B$, and $S$ are. Let it be denoted $r$.
\[ S_A^x = (1+rx)S - xM_A^{-1}BS= S + x(rS - M_A^{-1}BS) \]
$rS - M_A^{-1}BS$ is a vector independent of $x$. Let it be denoted by $T$. Thus,
\[ S_A^x = S + xT \] 
Thus, on an $x$-interval on which $S_A^x$ is of constant length, $S_A^x$ is given by $S+xT$. Further, each entry of $S_A^x$ is given by a linear equation $S_i + xT_i$. 
\end{proof}

Note that the above lemma does not make use of anything specific to player $A$ or $B$. Thus, it also applies to $S_B^x$ if the necessary conditions hold.

\begin{corollary}
\label{lengthstable}
Let player $A$ have advantage. Let $\ell_0$ be as above. Then there exists $x_0 > 0$ such that for all $0 < x \le x_0$, $\ell(S_A^x) = \ell_0$. 
\end{corollary}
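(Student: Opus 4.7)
The plan is to argue by contradiction. Suppose no such $x_0$ exists; since $\ell(S_A^x)$ takes only finitely many integer values, pigeonhole on this hypothesis yields a sequence $x_k \to 0^+$ along which $\ell(S_A^{x_k}) = \ell_1$ for a fixed $\ell_1 < \ell_0$. My first task is to pin down the global shape of $U_{\ell_0} := \{x > 0 : \ell(S_A^x) = \ell_0\}$. By Lemma \ref{openneighborhoods} this set is open and nonempty, and by Proposition \ref{wow} applied at length $\ell_0$, either $M_A(G_{a,b}\mid \ell_0)$ or $M_B(G_{a,b}\mid \ell_0)$ is invertible---I will treat the former case, the latter being symmetric via the Reverse Theorem applied to player $B$. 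The preceding linearity lemma then gives $S_A^x = S + xT$ on every component of $U_{\ell_0}$, with coefficients $S, T$ depending only on $M_A(G_{a,b}\mid \ell_0)$ and independent of which component we sit in. Each constraint $S_i + xT_i > 0$ carves out an open half-line of $\mathbb{R}$, so $U_{\ell_0}$ is the single open interval $\{x > 0 : S_i + xT_i > 0 \text{ for all } 0 \le i \le \ell_0 - 1\} = (a,b)$; if $a = 0$ the corollary follows, so I would assume $a > 0$.

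Two limiting strategies now come into play. By Lemma \ref{weakconvergence} and Theorem \ref{advantageunique}, the limit $\lim_{x \to a^+} S_A^x = S + aT$ coincides with the unique optimal strategy $S_A^a$ of the precise game $G_{a,b}^a$; gap-freeness of $S_A^a$ pins down an index $i^* \le \ell_0 - 1$ with $S_{i^*} + aT_{i^*} = 0$ and $T_{i^*} > 0$, so the polynomial extension to $x = 0$ satisfies $S_{i^*} < 0$. In parallel, Theorem \ref{strategyformula} and Proposition \ref{wow} at length $\ell_1$ give the explicit form $S_A^{x_k} = v_A^{x_k} \cdot M_A^{x_k}(G_{a,b}\mid \ell_1)^{-1} \mathbf{1}$ (suitably normalized), and combining with parts (1) and (2) of Theorem \ref{convergence} produces a valid limit $S^* := \lim_k S_A^{x_k}$ that is optimal for the unperturbed game $G_{a,b}$ with $\ell(S^*) \le \ell_1 < \ell_0$.

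The contradiction is to be extracted from the coexistence of these two objects. The polynomial $S + xT$ solves $M_A^x(S + xT) = v(x)\mathbf{1}$ as a rational identity in $x$ with $v|_{(a,b)} = v_A^x$, yet $S + 0 \cdot T$ fails to be a probability distribution because $S_{i^*} < 0$; meanwhile $S^*$ is an honest optimal strategy at $x = 0$ of length at most $\ell_1 < \ell_0$. The plan is to show that a suitable signed correction---either a convex combination of $S^*$ with a scaled polynomial extension, or a perturbation of $S_A^{x_k}$ by a small multiple of $T$---yields a valid length-$\ell_0$ strategy optimal at some $x > 0$ lying below $a$, placing this $x$ inside $U_{\ell_0}$ and contradicting $a = \inf U_{\ell_0}$. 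The main obstacle is that the uniqueness of the optimal strategy driving Lemma \ref{openneighborhoods} breaks at $x = 0$ when $G_{a,b}$ is imprecise, so the $p = 0$ analogue of that lemma is unavailable; the argument must instead exploit the rigid polynomial-in-$x$ structure of $S_A^x$ on constant-length intervals together with the Toeplitz form of $M_A^x$ to force the contradiction through.
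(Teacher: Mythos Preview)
Your setup is sound up through identifying the maximal open interval $(a,b)\subseteq\mathbb{R}_{>0}$ on which $\ell(S_A^x)=\ell_0$ and the linear representation $S_A^x=S+xT$ there. The gap is that your contradiction is never actually executed: the paragraph beginning ``The plan is to show that a suitable signed correction\ldots'' is a sketch of an intention, not an argument, and you yourself flag the obstacle that uniqueness fails at $x=0$. Nothing in the proposal explains how the objects $S^*$ (from the sequence $x_k\to 0$) and the polynomial extension $S+xT$ are combined to manufacture a valid length-$\ell_0$ optimal strategy at some $x\in(0,a)$, nor why such a combination would be optimal there. As written, the proof stops at the point where the real work begins.

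The paper's route avoids $x=0$ entirely and is much shorter. The observation you are missing is that the \emph{right} endpoint $b$ is also finite: once $x>1$, each chip is worth more than the whole game, so neither player ever bids above $0$ and hence $\ell(S_A^x)=1$; assuming $\ell_0\ge 2$ this forces $b\le 1$. Now both endpoints $a$ and $b$ lie in $\mathbb{R}_{>0}$, so Lemma~\ref{weakconvergence} and Lemma~\ref{openneighborhoods} apply at each: $S_A^a=S+aT$ and $S_A^b=S+bT$ are the unique optimal strategies there, and by maximality of $(a,b)$ both have length strictly less than $\ell_0$. In particular the entry at index $\ell_0-1$ vanishes at \emph{both} endpoints, i.e.\ $S_{\ell_0-1}+aT_{\ell_0-1}=0=S_{\ell_0-1}+bT_{\ell_0-1}$ with $a\ne b$. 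A nonconstant affine function of $x$ has at most one zero, so $S_{\ell_0-1}=T_{\ell_0-1}=0$, contradicting that this entry is positive on $(a,b)$. Hence $a=0$. Your construction of the sequence $x_k\to 0$ and the limit $S^*$ is unnecessary once you exploit the second endpoint.
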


\begin{proof}
Let $x \in \mathbb{R}_{>0}$ be chosen such that $\ell(S_A^x) = \ell_0$. By Proposition \ref{wow}, at least one of $M_A$ and $M_B$ is invertible. Suppose first that $M_A$ is invertible. By Lemma \ref{openneighborhoods}, there exists an open interval $(a,b)$ containing $x$ on which $S_A^x$ is constant-length. Let $(a,b)$ be the largest such open interval. We are able apply the above lemma. There exists vectors $S,T$ such that for all $x \in (a,b)$ 
\[ S_A^x = S + xT. \]
For $x > 1$, the value of a chip is greater than the value of winning the game so neither player will ever bid more than $0$. Thus, $b \le 1$. Suppose that $a > 0$. On this interval $S_A^x$ is given by $S+xT$ for some $S,T$. Therefore, the $(\ell_0)$-th entry of $S_A^x$ is either strictly increasing, strictly decreasing, or constant. By Lemma \ref{weakconvergence} and the uniqueness of Nash equilibrium strategies for the player with advantage.
\[ \lim_{x \rightarrow a} S_A^x = S_A^a\]
\[ \lim_{x \rightarrow b} S_A^x = S_A^b\]
If $S_A^a$ or $S_A^b$ have length $\ell_0$ then by Lemma \ref{openneighborhoods} there is an open interval about $a$ or $b$ respectively on which optimal strategies have length $\ell_0$ so $(a,b)$ is not maximal. Thus, both $S_A^a$ and $S_B^b$ must have length less than $\ell_0$. This implies that:
\[ \lim_{x \rightarrow a} (S_A^x)_{\ell_0 - 1} = 0 = S_i + aT_i \]
\[ \lim_{x \rightarrow b} (S_A^x)_{\ell_0 -1 } = 0 = S_i + bT_i\]
A linear equation has at most one zero unless $S_i$ and $T_i$ are both $0$. $S_i + xT_i = (S_A^x)_{\ell_0} \ne 0$ however so this cannot be the case. Therefore $a$ must be equal to $0$. Then $S_A^x$ is constant-length on some interval which has $0$ as an endpoint.

Now suppose that $M_B$ is invertible. We can perform the same operations on the optimal strategy of minimal length $S_B^x$ for player $B$ and then apply the Reverse Theorem to achieve the same result for player $A$. 
\end{proof}

Given this structure, we can complete our discussion of convergence.

\begin{theorem}
Let player $A$ have advantage. Then
\[ \lim_{x \rightarrow 0} S_A(G_{a,b}^x) = S_A(G_{a,b}) \]
exists and is optimal.
\end{theorem}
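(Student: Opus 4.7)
The plan is to assemble the final statement directly from Corollary \ref{lengthstable} and Lemma \ref{weakconvergence}; almost all of the substantive work has already been carried out in the preceding development. First I would invoke Corollary \ref{lengthstable} to obtain some $x_0 > 0$ such that $\ell(S_A(G_{a,b}^x)) = \ell_0$ for every $0 < x \le x_0$, so the length of the perturbed optimal strategy is eventually constant as $x$ decreases toward $0$.

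Next I would apply Lemma \ref{weakconvergence} with the interval $(0, x_0)$ and the constant length $\ell_0$. This immediately yields that $\lim_{x \to 0^+} S_A(G_{a,b}^x)$ exists. Lemma \ref{weakconvergence} further identifies this limit as an optimal strategy in the unperturbed game $G_{a,b}$, produced explicitly via the formula $M^{-1}\mathbf{1}/(\mathbf{1}^T M^{-1}\mathbf{1})$, where $M$ is whichever of $M_A(G_{a,b} \mid \ell_0)$ or $M_B(G_{a,b} \mid \ell_0)$ is invertible (using Proposition \ref{wow}, together with the Reverse Theorem in the latter case to transfer the conclusion from player $B$ back to player $A$). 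Defining $S_A(G_{a,b})$ to be this limit strategy is consistent with the phrasing of Theorem \ref{convergence}, where $S_A$ is required only to be \emph{some} Nash equilibrium strategy of the original game.

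The main obstacle in this overall program was already handled in Corollary \ref{lengthstable}: proving that $\ell(S_A^x)$ does not keep dropping off as $x \to 0^+$, but instead stabilizes at its maximum value $\ell_0$ on an interval having $0$ as an endpoint. Once that length stability is in hand, the remaining convergence is just a continuity argument for matrix inverses applied to the explicit formula of Theorem \ref{strategyformula}. The only subtlety worth flagging is that when $G_{a,b}$ is imprecise, the player with advantage may admit many optimal strategies, so this theorem should be read as selecting the particular equilibrium singled out by the limit of chip-value perturbations — a natural refinement of the Nash equilibrium concept in the all-pay bidding setting.
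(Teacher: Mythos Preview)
Your proposal is correct and matches the paper's own proof essentially line for line: invoke Corollary \ref{lengthstable} to get length stability on an interval $(0,x_0]$, then apply Lemma \ref{weakconvergence} to conclude existence and optimality of the limit. The additional commentary you provide about the explicit formula, the role of Proposition \ref{wow}, and the interpretation of $S_A(G_{a,b})$ as a selected equilibrium is accurate and goes slightly beyond the paper's terse two-sentence argument, but the underlying logic is identical.
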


\begin{proof}
By Corollary \ref{lengthstable}, there exists $x_0$ such that for all $0 < x \le x_0$, $\ell(S_A^x) = \ell_0$. These are the necessary conditions to apply Lemma \ref{weakconvergence} which gives the result.
\end{proof}

\section{Computing the Optimal Strategy}

Although we have developed results on the structure of optimal bidding in all-pay bidding games, we have yet to fully describe how these optimal strategies can be found.  In this section, we put together our results for precise games with our convergence results for imprecise games to give an algorithm to calculate the optimal bidding strategy for any state in an all-pay bidding game.

\subsection{Main Algorithm}
In this section, we will discuss the algorithm we developed to quickly calculate an optimal strategy.
Our algorithm first assigns to each chip an arbitrarily small but positive value $x$. This adjusted game is precise, so we will be able to take advantage of the structure we have shown for precise games. In particular, we will be able to use Theorem \ref{strategyformula}, which gives a formula for the unique bidding strategy belonging to the player with advantage, in terms of the payoff matrix and optimal length:
\[ S_A = \frac{M_A(G_{a,b} \mid \ell)^{-1} \cdot \mathbf{1}}{\mathbf{1}^T \cdot M_A(G_{a,b} \mid \ell)^{-1} \cdot \mathbf{1}} \]
From the convergence results in the previous section, the resulting strategy will be able to approximate an optimal strategy for player $A$ in an imprecise game to any desired degree of accuracy. Note that this strategy is not guaranteed to be a unique optimal strategy in the unadjusted game if the unadjusted game is not precise.
Once $S_A$ is known, we know by convergence that $\mathcal{R}(S_A)$ will have to be an optimal strategy for player $B$. $(S_A, \mathcal{R}(S_A))$ is then within any desired degree of accuracy of a Nash equilibrium for the unadjusted game.

For now we will assume the payoff matrix is known. Then, to implement Theorem \ref{strategyformula} we just need to invert the appropriate minor of that matrix, multiply by a vector of $1$'s, and rescale so that the entries of the resulting vector sum to $1$.  The problem now is to find this optimal length in a precise game where the payoff matrix is given.  The next two lemmas will allow us to use binary search to find the optimal length quickly.

\begin{lemma}
\label{lengthsearch1}
Let the game be precise. Let $\mathbf{1}_k$ be a vector of all $1$'s.  Then for all $1 \leq k \leq \ell$, $M_A(k)^{-1} \cdot \mathbf{1}_k$ will have all nonnegative entries.
\end{lemma}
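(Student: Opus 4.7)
The plan is to reduce the question to the restricted game $G_{a,b} \mid k$, which is itself precise by the observation following Lemma \ref{restrictedgame}, and then invoke Theorem \ref{strategyformula} in that restricted game. Let $S_A^{(k)}$ denote player $A$'s unique (by Theorem \ref{advantageunique}) optimal strategy in $G_{a,b} \mid k$, and let $\ell_k := \ell(S_A^{(k)})$. If $\ell_k = k$, then the Reverse Theorem forces $B$'s optimal strategy in the restricted game to have length exactly $k$ (its length is $\ell_k$ or $\ell_k+1$, and $k+1$ exceeds the bid cap), and Theorem \ref{strategyformula} applies to yield
\[
S_A^{(k)} = \frac{M_A(k)^{-1}\mathbf{1}_k}{\mathbf{1}_k^T M_A(k)^{-1}\mathbf{1}_k}.
\]
Because $S_A^{(k)}$ is gap-free with a positive first entry (Proposition \ref{grounded_gapfree}) and of length exactly $k$, every one of its $k$ entries is strictly positive; the denominator equals $1/v_A^{(k)} > 0$; so $M_A(k)^{-1}\mathbf{1}_k$ itself has strictly positive entries. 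The entire lemma thus reduces to the single sub-claim that $\ell_k = k$ whenever $k \le \ell$.

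To prove this sub-claim I would argue by contradiction: suppose $\ell_k < k$ for some $k \le \ell$. Applying the Reverse Theorem to the precise restricted game produces an optimal $B$-strategy $\mathcal{R}(S_A^{(k)})$ in $G_{a,b} \mid k$, supported on $\{0,\dots,\ell_k-1\}$. The key claim is that the padded pair $(S_A^{(k)}, \mathcal{R}(S_A^{(k)}))$ is actually a Nash equilibrium in the \emph{full} game $G_{a,b}$. Granting this, $S_A^{(k)}$ would be an optimal strategy for player $A$ in $G_{a,b}$, contradicting Theorem \ref{advantageunique} because it has length $\ell_k < k \le \ell$ and so differs from the unique length-$\ell$ strategy $S_A$.

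The Nash verification is the technical heart. For bids $i \in \{0,\dots,k-1\}$, the full-game payoff $(M_A S_A^{(k)})_i$ coincides with $(M_A(k) S_A^{(k)})_i$ and so is $\ge v_A^{(k)}$ by the Nash condition in $G_{a,b} \mid k$. For $i \ge k$, write $(M_A S_A^{(k)})_i = \sum_{j=0}^{\ell_k-1}\alpha_{j-i}(S_A^{(k)})_j$; every exponent satisfies $j-i<0$ since $j \le \ell_k-1 < k \le i$, and by the second clause of Remark \ref{precise_matrices} each $\alpha_{j-i}$ strictly increases as $i$ grows, so $(M_A S_A^{(k)})_i$ is strictly increasing for $i \ge \ell_k$ and remains $\ge v_A^{(k)}$ throughout. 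This rules out profitable deviations by $B$. A symmetric computation using the first clause of Remark \ref{precise_matrices} (strict decrease of $\alpha_m$ for $m > 0$) shows that $A$'s payoff from any pure deviation to a bid $j \ge k$ is likewise $\le v_A^{(k)}$, so $A$ has no profitable deviation either.

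The main obstacle is the monotonicity bookkeeping in this Nash verification: one must carefully match the sign of $j-i$ to the correct clause of Remark \ref{precise_matrices} and confirm that the strict inequalities propagate through the weighted sums. Once the sub-claim $\ell_k = k$ is in hand, invertibility of $M_A(k)$ is free from Lemma \ref{preciseinvertible} applied to $G_{a,b} \mid k$, and the conclusion follows immediately.
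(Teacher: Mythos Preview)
Your proof is correct and follows essentially the same approach as the paper: reduce to showing that the optimal strategy in the restricted game $G_{a,b}\mid k$ has full length $k$, then derive a contradiction to Theorem~\ref{advantageunique} by extending a shorter optimal strategy back to the full game $G_{a,b}$. Your direct verification that the padded pair $(S_A^{(k)},\mathcal{R}(S_A^{(k)}))$ is a Nash equilibrium in $G_{a,b}$ is slightly cleaner than the paper's argument, which instead splits into cases according to whether $v_A^k \ge v_A$ and handles each side of the equilibrium separately.
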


\begin{proof}
By similar reasoning as in Lemma \ref{preciseinvertible}, we know that $M_A(k)^{-1}$ will be invertible for all $k \leq \ell$.
We naturally consider the game $G_{a, b} \mid k$.
Let $S_A^k$ and $S_B^k$ be $A$'s and $B$'s optimal strategies in this game.
Note that if $k=\ell$, then by definition of $\ell$, we have that $M_A(\ell)\cdot S_A$ gives a constant (nonnegative) vector,
so $M_A(\ell)^{-1} \cdot \mathbf{1}$ will be $S_A$ scaled by $1/v_A$.
This will have all nonnegative entries because $S_A$ is a strategy.
We can extend this reasoning to when $k<\ell$ if we know that $S_A^k$ still has length $k$, as it must also give some constant payoff, $v_A^k$, in $G_{a, b} \mid k$.

Suppose $S_A^k$ does not have length $k$.  Then $S_A^k$ has length $m < k < \ell$.  Let $v_A^k$ be the value of $G_{a, b} \mid k$ for player $A$.
Suppose that $v_A^k \geq v_A$.  Then, we can make strategy $S_A'$ for player $A$ in $G_{a,b}$, by extending $S_A^k$ to the full game, where $(S_A)_i=(S_A^k)_i$ if $i\leq m-1$, and is $0$ otherwise.
Then, note that $(M_A \cdot S_A')_i=v_A^k \geq v_A$ if $i \leq m-1$.
Because $m<k$, $m \leq k-1$ where $k-1$ is the maximal number of chips useable in the $G_{a, b} \mid k$ game.
Thus, if $i=m$, $(M_A \cdot S_A')_m \geq v_A^k \geq v_A$ by definition of Nash Equilibrium for $G_{a, b} \mid k$.  This is $A$'s payoff against $S_A'$ if $B$ purely bids $m$.

But since $A$'s maximal bid in $S'$ is $m-1$, that means if $B$ uses a pure strategy where she bids $i>m$ chips, she will just be winning the same bids by more chips, which cannot be better in any way.
Thus, $(M_A \cdot S_A')_i \geq (M_A \cdot S_A')_m \geq v_A^k \geq v_A$.
Thus, for all $0<i<\ell-1$, $(M_A \cdot S_A')_i \geq v_A$, so $S_A'$ is a Nash Equilibrium for $G_{a, b}$ as well.
But $S_A'$ has length $m < \ell$, so it would have to be distinct from $S_A$ because it has a different length.  This cannot be the case as $A$'s optimal strategy is unique.  Thus, we have a contradiction and $S_A^k$ cannot have length less than $k$.

Thus, $S_A^k$ has length $k$, so by the same argument as the $k=\ell$ case, all the entries of $M_A(k)^{-1} \cdot \mathbf{1}$ are nonnegative.
Note that because none of the above reasoning depended upon player $A$ having advantage, if $v_A^k < v_A$, we can apply the above argument from player $B$'s perspective. 
\qedhere
\end{proof}

\begin{lemma}
\label{lengthsearch2}
Let the game be precise. Let $\mathbf{1}_k$ be a vector of all $1$'s.  Then for all $k>\ell$, either $M_A(k)$ is not invertible or 
$M_A(k)^{-1} \cdot \mathbf{1}_k$ will have all nonnegative entries.
\end{lemma}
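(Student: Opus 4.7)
The plan is to argue by contradiction, following the template of Lemma \ref{lengthsearch1}. I will suppose that $k > \ell$, that $M_A(k) := M_A(G_{a,b} \mid k)$ is invertible, and that $y := M_A(k)^{-1} \mathbf{1}_k$ has at least one negative entry, and derive a contradiction with the value and uniqueness of the optimal strategy in the restricted game.

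First I set up the restricted game. Since precision descends to successor-state restrictions, $G_{a,b} \mid k$ is itself precise. By Lemma \ref{restrictedgame} combined with Theorem \ref{advantageunique}, the value of $G_{a,b} \mid k$ is $v_A$ and the unique optimal strategy for player $A$ is $S_A$ (length $\ell$), extended by zeros to length $k$. From the computation in the proof of the Reverse Theorem, $M_A(k) \cdot S_A = (v_A, \dots, v_A, w_\ell, \dots, w_{k-1})^T$ with each $w_i \ge v_A$.

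The central construction is the affine family $S_A + ty$ for $t \in \mathbb{R}$, viewing $S_A$ as a length-$k$ vector. Since $M_A(k)\, y = \mathbf{1}_k$, linearity gives
\[ M_A(k)(S_A + ty) = M_A(k) S_A + t\mathbf{1}_k, \]
every entry of which is at least $v_A + t$. If $S_A + ty$ happens to be componentwise nonnegative, renormalizing by its sum $1 + t\|y\|_1$ produces a valid strategy in $G_{a,b} \mid k$ guaranteeing player $A$ a payoff of $(v_A + t)/(1 + t\|y\|_1)$. A short computation shows this exceeds $v_A$ precisely when $\|y\|_1 < 1/v_A$ and $t$ has the appropriate small sign, contradicting that $v_A$ is the value of the restricted game.

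The main obstacle is ensuring that $S_A + ty$ can be made componentwise nonnegative for $t$ of a suitable sign. Since $S_A$ vanishes on positions $\ell, \dots, k-1$, feasibility forces the sign of $t$ to be opposite that of any nonzero tail entry $y_j$ with $j \ge \ell$: positive tail entries force $t > 0$, negative ones force $t < 0$. If the tail of $y$ is single-signed, the perturbation argument applies directly. If the tail is mixed, or if $y$ is supported on the first $\ell$ coordinates, one must instead invoke the parallel argument on player $B$'s side using $M_B(k) = \mathbf{1} - M_A(k)^T$ and the Reverse Theorem, mirroring how Lemma \ref{lengthsearch1}'s proof splits on whether $v_A^k \ge v_A$ or $v_A^k < v_A$. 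The remaining delicate case is $\|y\|_1 \ge 1/v_A$: the direct perturbation no longer beats $v_A$, so the argument must further combine $S_A + ty$ with $S_A$ itself (or with $\mathcal{R}(S_A)$ on the $B$-side) to produce a length-$k$ strategy that is optimal in $G_{a,b} \mid k$, contradicting the uniqueness guaranteed by Theorem \ref{advantageunique}.
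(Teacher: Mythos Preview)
Your proposal is aimed at the wrong target. The statement of Lemma~\ref{lengthsearch2} contains a typo: as written it asserts that for $k>\ell$ the vector $M_A(k)^{-1}\mathbf{1}_k$ has \emph{all nonnegative} entries, but the paper's own proof establishes (and the binary-search algorithm requires) the \emph{opposite} conclusion, namely that for $k>\ell$ either $M_A(k)$ is singular or $M_A(k)^{-1}\mathbf{1}_k$ has at least one \emph{negative} entry. Indeed, if the literal statement were true, Lemmas~\ref{lengthsearch1} and~\ref{lengthsearch2} together would say the nonnegativity test never fails, and the binary search could not locate~$\ell$. The paper's proof ends with the sentence ``so if $M_A(k)$ is invertible, $M_A(k)^{-1}\cdot\mathbf{1}_k$ must have some negative terms,'' confirming the intended claim.

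Consequently, your contradiction hypothesis is inverted: you assume $y=M_A(k)^{-1}\mathbf{1}_k$ has a negative entry and try to reach a contradiction, which would prove a false statement. The correct setup is to assume $y$ is entrywise nonnegative, normalize it to a valid length-$k$ strategy $S=y/(\mathbf{1}^T y)$ with $M_A(k)S=v\mathbf{1}_k$ for some constant $v$, and then rule this out. The paper does this by a three-way split on the value $v$: if $v>v_A$ then $S$ beats $S_A$ against $\mathcal{R}(S_A)$; if $v<v_A$ then $\mathcal{R}(S)$ beats $\mathcal{R}(S_A)$ against $S_A$; and if $v=v_A$ then $\mathcal{R}(S)$ is a length-$k$ equilibrium strategy for $B$, which via Theorem~\ref{noadvantagestrategies} forces $k=\ell+1$ and a specific convex-combination form for $S$, and a direct computation using the strict inequalities $\alpha_1<\alpha_0,\dots,\alpha_\ell<\alpha_{\ell-1}$ from precision then yields a contradiction. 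Your perturbation idea $S_A+ty$ is not needed once the hypothesis is set correctly; the normalized $y$ already \emph{is} the putative strategy to be eliminated.
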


\begin{proof}
Assume $M_A(k)$ is invertible.

We begin by showing there is no valid length $(\ell+1)$-strategy for player $A$ that produces the same payoff for player $B$'s first $\ell+1$ pure strategies. Suppose there does exist such a strategy $S$. Let $v$ be the payoff that $S$ produces against player $B$'s first $\ell+1$ pure strategies (pure bids from $0$ up to $\ell$). Note that because $S_A$ has length $\ell$ there is a Nash equilibrium strategy $\mathcal{R}(S_A)$ of length $\ell$ for player $B$. We consider three cases:
\begin{description}
  \item[(1) $v > v_A$] \hfill \\
	Since $B$ bids at most $\ell-1$, we only need to consider the first $\ell$ coordinates of $M_A \cdot S_A$ and $M_A \cdot S$.
	By our assumption, $v>v_A$ so $S$ is strictly better than $S_A$ against $\mathcal{R}(S_A)$. Thus $S_A$ cannot be a Nash Equilibrium strategy, which is a contradiction.
  \item[(2) $v < v_A$] \hfill \\
	If $v<v_A$ then let player $B$ use the strategy $\mathcal{R}(S)$. It is easy to verify that $\mathcal{R}(S)$ produces the payoff $1-v > 1-v_a$ against player $A$'s first $\ell+1$ pure strategies. Thus, by similar reasoning as in the previous case, $\mathcal{R}(S)$ is strictly better than $\mathcal{R}(S_A)$ against $S_A$, so $\mathcal{R}(S_A)$ cannot be a Nash Equilibrium strategy, which is a contradiction.
  \item[(3) $v = v_A$] \hfill \\
	If $S=(s_0, \ldots, s_{\ell}, 0, \ldots, 0)^T$, then expanding the first $\ell+1$ coordinates of $M_A \cdot S$ results in the equations $\alpha_i s_0 + \cdots + \alpha_{i+\ell}s_{\ell} = v_A$ for $i=0, \ldots, -\ell$.
	Considering the game from player $B$'s perspective, note that $\mathcal{R}(S)$ gives $B$ a payoff of $1-v_A$ against $A$'s first $\ell+1$ strategies.
	In particular, $B$'s payoff against $A$ bidding $\ell$ will be \[(1-\alpha_{\ell})s_{\ell}+(1-\alpha_0)s_0=1-(\alpha_{\ell}s_{\ell}+\cdots+\alpha_0s_0)=1-v_A.\]
	$B$'s payoff $x$ against against $A$ bidding $\ell+1$ will be \[x=1-(\alpha_{\ell+1}s_{\ell}+\alpha_1s_0).\]
	Note that because player $A$ is winning ties, $\alpha_{\ell+1}<\alpha_{\ell}, \ldots, \alpha_1<\alpha_0$, as in each case $A$ is winning by one more chip.
	Thus, $\alpha_{\ell}s_{\ell}+\cdots+\alpha_0s_0>\alpha_{\ell+1}s_{\ell}+\alpha_1s_0$ which means
	\[1-v_A=1-(\alpha_{\ell}s_{\ell}+\cdots+\alpha_0s_0)\leq 1-(\alpha_{\ell+1}s_{\ell}+\alpha_1s_0)=x.\]
	Similarly, $B$'s payoff if $A$ bids anything greater than $k$ will be greater than $1-v_A$.
	Thus, $\mathcal{R}(S)$ is a Nash equilibrium strategy of length $k$ for player $B$.
	Note that if $k$ is at least $\ell+2$, then $\mathcal{R}(S)$ will have length at least $\ell+2$, which will be a contradiction if $S_A$ has length $\ell$.
	Since $k > \ell$, this means we must have $k = \ell+1$.
	Then, $\mathcal{R}(S)$ is a Nash Equilibrium strategy of length $\ell+1$, so by Theorem \ref{noadvantagestrategies}
	it must be of the form $\lambda (s_{\ell-1}, \ldots, s_0, 0, \ldots, 0) + (1-\lambda) (0, s_{\ell-1}, \ldots, s_0, 0, \ldots, 0)$ for $0 \leq \lambda \leq 1$.
	In turn, $S$ must be of the form $\lambda (0, s_0, \ldots, s_{\ell-1}, 0, \ldots, 0) + (1-\lambda)(s_0, \ldots, s_{\ell-1}, 0, 0, \ldots, 0)$.
	We can now write $M_A(\ell+1) \cdot S = v_A \mathbf{1}_{\ell+1}$ as $M_A(\ell+1) \cdot \lambda (0, s_0, \ldots, s_{\ell-1}) + (1-\lambda)(s_0, \ldots, s_{\ell-1}, 0) = v_A \mathbf{1}$, which can be expanded to the equation
	\[ \lambda \left( \begin{array}{c}
\alpha_1s_0+\cdots+\alpha_{\ell}s_{\ell-1} \\
\alpha_0s_0+\cdots+\alpha_{\ell-1}s_{\ell-1} \\
\vdots \\
\alpha_{-(\ell-1)}s_0+\cdots+\alpha_0s_{\ell-1} \end{array} \right)
+
(1-\lambda) \left( \begin{array}{c}
\alpha_0s_0+\cdots+\alpha_{\ell-1}s_{\ell-1} \\
\vdots \\
\alpha_{-(\ell-1)}s_0+\cdots+\alpha_0s_{\ell-1} \\
\alpha_{-\ell}s_0+\cdots+\alpha_{-1}s_{\ell-1} \end{array} \right)
=
\left( \begin{array}{c}
v_A \\
v_A \\
\vdots \\
v_A \end{array} \right)\] 

By considering the first coordinate, we get the equation
\[ \lambda(\alpha_1s_0 + \cdots + \alpha_{\ell}s_{\ell-1}) + (1-\lambda)v_A = v_A \]
so we must have $\alpha_1s_0 + \cdots + \alpha_{\ell}s_{\ell-1}=v_A$ as well. Therefore,
\[ \alpha_1s_0 + \cdots + \alpha_{\ell}s_{\ell-1}=\alpha_0s_0 + \cdots + \alpha_{\ell-1}s_{\ell-1}.\]
But since the game is precise, there must be an inequality for all the coefficients: $\alpha_1 < \alpha_0, \ldots, \alpha_{\ell} < \alpha_{\ell-1},$ so
\[ \alpha_1s_0 + \cdots + \alpha_{\ell}s_{\ell-1}<\alpha_0s_0 + \cdots + \alpha_{\ell-1}s_{\ell-1} \]
because not all the $s_i$'s are $0$.
Thus, we have a contradiction, and $k$ cannot be $\ell+1$ either.

Thus, no such strategy $S$ can exist, so if $M_A(k)$ is invertible, $M_A(k)^{-1} \cdot \mathbf{1}_k$ must have some negative terms.
\qedhere
\end{description} 
\end{proof}

We can implement the binary search algorithm as follows. Let the lower bound, $low$, start as $1$. Let the upper bound, $high$, start as $\min(a,b)+1$. 

\begin{algorithm}[H]
\DontPrintSemicolon
\textbf{Function} LSearch($M_A$, $low$, $high$)\;
\eIf{$low + 1 = high$}
	{return $low$}
    {
    	$k = (low+high)/2$\;
  		\eIf{$M_A(k)^{-1} \cdot 1_k$ is all nonnegative}{
   			return LSearch($M_A$, $k$, $high$)\;
   		}{\tcp{$M_A(k)$ is not invertible or $M_A(k)^{-1} \cdot 1_k$ has a negative entry}
   			return LSearch($M_A$, $low$, $k$)\;
  		}
   	}
 \caption{Binary Search For Length}
\end{algorithm}
By Lemmas \ref{lengthsearch1} and \ref{lengthsearch2}, this algorithm will return the length of the optimal strategy for player $A$. We can then apply our formula to directly compute player $A$'s unique optimal strategy. The reverse of this strategy is an optimal strategy for player $B$. This completes the algorithm. From our results on the convergence of strategies, this algorithm is also able to approximate, with any desired degree of accuracy, optimal strategies for imprecise games.

\subsection{Recursion on Directed Graphs}
So far, our results apply to the strategy for bidding on a single turn in an all-pay bidding game. This assumes some prior knowledge of successor game states that allows the payoff matrix to be already known.  Thus, to use our algorithm to compute Nash equilibria for any all-pay bidding game state, we need some way of first finding the payoff matrix. By noting that the payoff matrices for end states (where one player has already won) can be set as $0$ and $1$ for win and loss, we use recursion from the end states of the game to find the payoff matrix for an arbitrary turn.

Consider a combinatorial game $G$ represented as a directed graph $D=(V,E)$ with two vertices marked as $\mathcal{A}$ and $\mathcal{B}$ and a token placed at some vertex of the graph. We can think of each vertex as the starting position of a subgame of $G$. Thus for player $A$ with $a$ chips and player $B$ with $b$ chips, the token on vertex $w$, we write the game as $w_{a,b}$. Let $\mathcal{S}(w)$ give all the vertices that can be moved to from $w$.

We can compute $v_A$ as follows:
$$v_A(w_{a,b}) = \left\{\begin{array}{cc}
1 & \text{if $w = \mathcal{A}$} \\
0 & \text{if $w = \mathcal{B}$} \\
S_B^T \cdot X \cdot S_A & \text{otherwise}
\end{array}\right.$$
Then if $A$ bids $i$ and $B$ bids $j$ and $A$ makes a move then $A$'s payoff will be
\[ A(i,j) = \max_{w' \in \mathcal{S}_A(w)} v_A(w'_{a-j+i, b-i+j}) \]
because $A$ will seek to maximize his probability of winning over all of his possible sucessor states. If $A$ bids $i$ and $B$ bids $j$ and $B$ makes a move then $A$'s payoff will be
\[ B(i,j) = \min_{w' \in \mathcal{S}_B(w)} v_A(w'_{a-j+i, b-i+j})\]
because $B$ will seek to minimize $A$'s probability of winning over all of her possible sucessor states.
Therefore
$$X_{i,j} = \left\{\begin{array}{cc} 
\max\left(A(i,j), B(i,j)\right) & \text{if $i < j$ or $i=j$ and $A$ has advantage} \\
\min\left(A(i,j), B(i,j)\right) & \text{if $i > j$ or $i=j$ and $B$ has advantage} \\
\end{array} \right.$$
as each player will consider the best possible scenario if he moves and the worst possible scenario if their opponent moves. Then $S_A$, $S_B$ can computed from this payoff matrix $X$, using our algorithm from before.

Note this allows us to recurse up the directed graph from states $\mathcal{A}$ and $\mathcal{B}$, first with values for those states, then values for the states one move away (i.e. $v$ such that either $\mathcal{A}$ or $\mathcal{B} \in S(v)$), then states two moves away, and so on.

\subsection{Complexity}

An arbitary $n \times n$ matrix can be inverted in $O(n^3)$ time using the Gauss-Jordan method. There exist, however, many more efficient algorithms specific to Toeplitz matrices. In particular, the Levinson-Trench-Zohar algorithm can solve a Toeplitz system in $O(n^2)$ time \citep{Musicus1988}.

For an $n \times n$ matrix, the binary search algorithm requires $\log(n)$ iterations. Each iteration requires solving one Toeplitz system and scanning one vector for negative values. Thus, the algorithm runs in time on the order of
$\log(n) \cdot (O(n^2) + O(n)) = O(\log(n)n^2)$. Thus finding an optimal strategy and corresponding payoff for a given payoff matrix requires time on the order of $O(\log(n)n^2)$.

A simple implementation of our recursive algorithm would take time growing exponentially with the depth of $D$. We can greatly speed up this process by storing each $v_A(i,j)$ that is computed. Then when $v_A(i,j)$ must be computed again the value can be looked up rather than recomputed. In the worst case, the program must compute $v_A$ for every possible combination of chips at every vertex. Because the sum of the chips is constant, this requires at most $(a+b)\cdot|V|$ computations. Thus, the entire algorithm runs in time on the order of  $O(|V| \cdot \log(n)n^2)$ where $n=a+b$. For comparison, a linear programming algorithm to achieve the same results would require time on the order of $O(|V| \cdot n^{3.5})$ \citep{DBLP}.

\section{Acknowledgements}
We would like to thank Michael Landry and Sam Payne for exposing to us the research potential surrounding all-pay bidding games and giving guidance in the research process. We would also like to thank Aviezri Fraenkel for suggesting the idea of integrating economic and combinatorial games and Ilan Adler for his helpful comments on an earlier draft of this paper. This paper was also supported in part by NSF grant CAREER DMS-1149054.


\end{document}